\newcommand{\nc}{\newcommand}
\nc{\al}{\alpha}
\nc{\bt}{\beta}
\nc{\gm}{\gamma}
\nc{\Gm}{\Gamma}
\nc{\dl}{\delta}
\nc{\Dl}{\Delta}
\nc{\lb}{\lambda}
\nc{\om}{\omega}
\nc{\Om}{\Omega}
\nc{\sg}{\sigma}
\nc{\tht}{\theta}
\nc{\Tht}{\Theta}
\nc{\vf}{\varphi}
\nc{\ve}{\varepsilon}
\nc{\zt}{\zeta}
\nc{\bF}{\mathbb{F}}
\nc{\bR}{\mathbb{R}}
\nc{\bL}{{\bf L}}
\nc{\bP}{{\bf P}}
\nc{\cA}{\mathcal{A}}
\nc{\cB}{\mathcal{B}}
\nc{\cE}{\mathcal{E}}
\nc{\cF}{\mathcal{F}}
\nc{\cH}{\mathcal{H}}
\nc{\cJ}{\mathcal{J}}
\nc{\cK}{\mathcal{K}}
\nc{\cM}{\mathcal{M}}
\nc{\cN}{\mathcal{N}}
\nc{\cL}{\mathcal{L}}
\nc{\cX}{\mathcal{X}}
\nc{\cP}{\mathcal{P}}
\nc{\cV}{\mathcal{V}}
\nc{\cv}{v}
\nc{\pa}{\partial}
\nc{\sbs}{\subset}
\nc{\sbseq}{\subseteq}
\nc{\wt}{\widetilde}
\nc{\wh}{\widehat}
\nc{\Ra}{\Rightarrow}
\nc{\Lra}{\Leftrightarrow}
\nc{\LLra}{\Longleftrightarrow}
\nc{\la}{\langle}
\nc{\ra}{\rangle}
\nc{\ua}{\uparrow}
\nc{\str}{\stackrel}
\nc{\ol}{\overline}
\nc{\ul}{\underline}
\nc{\os}{\overset}
\nc{\us}{\underset}
\nc{\fa}{\forall}
\nc{\ds}{\displaystyle}
\nc{\XA}{\{X\to\}}
\nc{\Pas}{P\text{-\it a.s.}}
\nc{\vnth}{\varnothing}
\nc{\loc}{\operatorname{loc}}
\nc{\rank}{\operatorname{rank}}
\nc{\tr}{\operatorname{tr}}
\nc{\diag}{\operatorname{diag}}
\nc{\Ball}{\operatorname{Ball}}
\newcommand{\const}{\operatorname{const}}
\newtheorem{theorem}{Theorem}[section]
\newtheorem{proposition}{Proposition}[section]
\theoremstyle{definition}
\newtheorem{definition}{Definition}[section]
\theoremstyle{remark}
\newtheorem{remark}{Remark}[section]
\numberwithin{equation}{section}
\subjclass{60G22, 62F35, 91B28, 62F35, 62M05, 62M09}
\keywords{Stochastic volatility, small diffusion, robust parameter estimate, optimal mean-variance robust hedging}
\begin{document}

\title[Optimal Robust Mean-Variance Hedging]{Optimal Robust Mean-Variance Hedging in Incomplete Financial Markets}

\author{N. Lazrieva, T. Toronjadze}

\maketitle

\begin{center}
Georgian--American University, Business School, 3 Alleyway II, \\Chavchavadze Ave. 17\,a, Tbilisi, Georgia, E-mail: toronj333@yahoo.com \\[2mm]
A. Razmadze Mathematical Institute, 1, M. Aleksidze St., Tbilisi, Georgia 
\end{center}

\begin{abstract}
Optimal $B$-robust  estimate is constructed for multidimensional parameter in drift coefficient of diffusion type process with small noise. Optimal mean-variance robust (optimal $V$-robust) trading strategy is find to hedge in mean-variance sense the contingent claim in incomplete financial market with arbitrary information structure and misspecified volatility of asset price, which is modelled by multidimensional continuous semimartingale. Obtained results are applied to stochastic volatility model, where the model of latent volatility process contains unknown multidimensional parameter in drift coefficient and small parameter in diffusion term.

\end{abstract}

\section{Introduction, Motivation and Results}

The hedging and pricing of contingent claims in incomplete financial markets, and dynamic portfolio selection problems are important issues in modern theory of finance. These problems are associated due to the so-called mean-variance approach. 

For determining a ``good'' hedging strategy in incomplete market with arbitrary information structure $F=(\cF)_{0\leq t\leq T}$, one riskless asset and $d$, $d\geq 1$, risky assets, whose price process is a semimartingale $X$, the mean-variance approach suggests to use the quadratic criterion to measure the hedging error, i.e. to solve the mean-variance hedging problem introduced by F\"{o}llmer and Sondermann \cite{10}:
\begin{equation}\label{4-1.1}
    \text{minimize} \;\;\; E\Bigg( H-x-\int_0^T \tht_t d X_t\Bigg)^2 
        \;\;\;\text{over all} \;\;\; \tht\in \Tht,
\end{equation}
where contingent claim $H$ is a $\cF_T$-measurable square-integrable random variable (r.v.), $x$ is an initial investment, $\Tht$ is a class of admissible trading strategies, $T$ is an investment horizon. 

The mean-variance formulation by Markowitz \cite{27}, provides a foundation for a single period portfolio selection (see, also Merton \cite{28}). In recent paper of Li and Ng \cite{23} the concept of Markowitz's mean-variance formulation for finding the optimal portfolio policy and determining the efficient frontier in analytical form has been extended to multiperiod portfolio selection.

As it pointed out in Li and Ng \cite{23} the results on multiperiod mean-variance formulation with one riskless asset can be derived using the results of the mean-variance hedging formulation.

Therefore, the mean-variance hedging is s powerful approach for both above mentioned major problems.

The problem (\ref{4-1.1}) was intensively investigated in last decade (see, e.g., Dufiie and Richardson \cite{9}, Schwezer \cite{37}, \cite{38}, \cite{39}, Delbaen et al. \cite{8}, Monat and Striker \cite{29}, Rheinl\"{a}nder and Schweizer \cite{34}, (RSch hereafter), Pham et al. \cite{32}, Gourieroux et al. \cite{11} (GLP hereafter), Laurent and Pham \cite{19}).

A stochastic volatility model, proposed by Hull and White \cite{13} and Scott \cite{40}, where the stock price volatility is an random process, is a popular model of incomplete market, where the mean-variance hedging approach can be used (see, e.g., Laurent and Pham \cite{19}, Biagini et al. \cite{13}, Mania and Tevzadze \cite{25}, Pham et al. \cite{32}).

Consider the stochastic volatility model described by the following system of SDE 
\begin{equation}\label{4-1.2}
\begin{gathered}
    d X_t=X_t \, dR_t, \quad X_0>0, \\
    d R_t=\mu_t(R_t,Y_t)\,dt +\sg_. dw_t^R, \quad R_0=0, \\
    \sg_t^2=f(Y_t), \\
    dY_t=a(t,Y_t;\al)\,dt+\ve\,d w_t^\sg, \quad Y_0=0,
\end{gathered}
\end{equation}
where $w=(w^R, w^\sg)$ is a standard two-dimensional Wiener process, defined on complete probability space $(\Om,\cF, P)$, $F^w=(\cF_t^w)_{0\leq t\leq T}$ is the $P$-augmentation of the natural filtration $\cF_t^w=\sg(w_s, 0\leq s\leq t)$, $0\leq t\leq T$, generated by $w$, $f(\cdot)$ is a continuous one-to-one positive locally bounded function (e.g., $f(x)=e^x$), $\al=(\al_1,\dots,\al_m)$, $m\geq 1$, is a vector of unknown parameters, and $\ve$, $0<\ve\ll 1$, is a small number. Assume that the system (\ref{4-1.2}) has an unique strong solution.

This model is analogous to the model proposed by Renault and Touzi \cite{33} (RT hereafter). The principal difference is the presence of small parameter $\ve$ in our model, which due to the assumption that the volatility of randomly fluctuated volatility process is small (see, also Sircar and Papanicolau \cite{41}). Thus assumption enables us to use the prices of trading options with short, nearest to the current time value maturities for volatility process filtration and parameter estimation purposes (see below). In contrast, RT \cite{33} needs to assume that there exist trading derivatives with any (up to the infinity) maturities.

Important feature of the stochastic volatility models is that volatility process $Y$ is unobservable (latent) process. To obtain explicit form of optimal trading strategy full knowledge of the model of the process $Y$ is necessary and hence one needs to estimate the unknown parameter $\al=(\al_1,\dots,\al_m)$, $m\geq 1$.

A variety of estimation procedures are used, which involve either direct statistical analysis of the historical data or the use of implied volatilities extracted from prices of existing traded derivatives.

For example, one can use the following method based on historical data.

Fix the time variable $t$. From observations $X_{t_0^{(n)}},\dots,X_{t_n^{(n)}}$, $0=t_0^{(n)}<\cdots<t_n^{(n)}=t$, $\max\limits_j [t_{j+1}^{(n)}-t_j^{(n)}] \to 0$, as $n\to 0$, calculate the realization of yield process $R_t=\int\limits_0^t \frac{dX_s}{X_s}$, and then calculate the sum
$$  
    S_n(t) =\sum_{j=0}^{n-1} |R_{t_{j+1}^{(n)}} -R_{t_j^{(n)}}|^2.
$$

It is well-known (see, e.g., Lipster and Shiryaev \cite{24}) that 
$$  
    S_n(t) \str{P}{\to} \int_0^t \sg_s^2\,ds \;\;\;\text{as} \;\;\; n\to \infty.
$$

Since $\sg_t^2(\om)=f(Y_t)$ is a continuous process we get 
$$  
    \sg_t^2(\om) =\lim_{\Dl \downarrow 0} \frac{F(t+\Dl,\om)-F(t,\om)}{\Dl}\,,
$$
where $F(t,\om)=\int\limits_0^t \sg_s^2(\om)ds$.

Hence, the realization $(y_t)_{0\leq t\leq T}$ of the process $Y$ can be found by the formula $y_t=f^{-1}(\sg_t^2)$, $0\leq t\leq T$.

More sofisticated  methods using the same idea can be found, e.g., in Chesney et al. \cite{5}, Pastorello \cite{31}. 

We can use the reconstructed sample path $(y_t)$, $0\leq t\leq T$. to estimate the unknown parameter $\al$ in the drift coefficient of diffusion process $Y$. 

The second, market price adjusted procedure of reconstruction the sample path of volatility process $Y$ and parameter estimate was suggested by RT \cite{33}, where they used implied volatility data. 

We present a quick review of this method, adapted to our model (\ref{4-1.2}).

Suppose that the volatility risk premium $\lb^\sg\equiv 0$, meaning that the risk from the volatility process is non-compensated (or can be diversified away). Then the price $C_t(\sg)$ of European call option can be calculated by the Hull and White formula (see, e.g., RT \cite{33}), and Black-Scholes (BS) implied volatility $\sg^i(\sg)$ can be found as an unique solution of the equation 
$$  
    C_t(\sg)=C_t^{BS}(\sg^i(\sg)),
$$
where $C^{BS}(\sg)$ denotes the standard BS formula written as a function of the volatility parameter $\sg$. 

Here (for further estimational purposes) only at-the-money options are used.

Under some technical assumptions (see Proposition 5.1 of RT \cite{33}, and Bujeux and Rochet \cite{24} for general diffusion of volatility process)
\begin{equation}\label{4-1.3}
    \frac{\pa\sg_t^i(\sg,\al)}{\pa \sg_t}>0
\end{equation}
(remember that the drift coefficient of process $Y$ depends on unknown parameter $\al$).

Fix current value of time parameter $t$, $0\leq t\leq T$, and let $0<T_1<T_2<\cdots<T_{k-1}<t<T_k$ be the maturity times of some traded at-the-money options. 

Let $\sg_{t_j^\ve}^{i^*}$ be the observations of an implied volatility at the time moments $0=t_0^\ve<t_1^\ve<\cdots<t_{[\frac{t}{\ve}]}=t$, $\max\limits_j [t_{j+1}^\ve-t_j^\ve]\to 0$, as $\ve\to 0$.

Then, using (\ref{4-1.3}), and solving the equation 
$$ 
    \sg_{t_j^\ve}^i(\sg_{t_j^\ve},\al)=\sg_{t_j^\ve}^{i^*},
$$
one can obtained the realization $\{\wt\sg_{t_j^\ve}\}$ of the volatility $(\sg_t)$, and thus, using the formula $y_{t_j^\ve}=f^{-1}(\wt\sg_{t_j^\ve}^2)$, the realization $\{y_{t_j^\ve}\}$ of volatility process $(Y_t)$, which can be viewed as the realization of nonlinear AR(1) process: 
$$  
    Y_{t_{j+1}^\ve} -Y_{t_j^\ve} =a(t_j^\ve,Y_{t_j^\ve};\al)
        (t_{j+1}^\ve-t_j^\ve) +\ve(w_{t_{j+1}^\ve}^\sg-w_{t_j^\ve}^\sg).
$$

Using the data $\{y_{t_j^\ve}\}$ one can construct the MLE $\wh\al_t^\ve$ of parameter $\al$, see, e.g., Chitashvili et al. \cite{26}, \cite{27}, Lazrieva and Toronjadze \cite{20}.

Remember the scheme of construction of MLE. Rewrite the previous AR(1) process, using obvious simple notation, in form
$$  
    Y_{j+1}-Y_j =a(t_j,Y_j;\al)\Dl +\ve \Dl w_j^\sg.
$$

Then
\begin{align*}
    \frac{\pa}{\pa y}\, P\{Y_{j+1}\leq y \mid Y_j\} & =
        \frac{1}{\sqrt{2\pi\Dl\ve}}\,\exp 
        \left( -\frac{(y-Y_j-a(t_j,Y_j;\al)\Dl)^2}{2\ve^2\Dl} \right) \\
    & =: \vf_{j+1}(y,Y_j;\al),
\end{align*}
and the $\log$-derivative of the likelihood process $\ell_t=(\ell_t^{(1)},\dots,\ell_t^{(m)})$ is given by the relation 
$$  
    \ell_t^{(i)}=\sum_j \ell_{j+1}^{(i)}, \quad i=\ol{1,m},
$$
where 
\begin{gather*}
    \ell_{j+1}^{(i)}(y;\al)=\frac{\pa}{\pa \al_i}\,
        \ln\vf_{j+1}(y,Y_j;\al) \\
    =\frac{1}{\ve^2\Dl}\,(y-Y_j-a(t_j,Y_j;\al)\Dl)
        \dot{a}^{(i)}(t_j,Y_j;\al)\Dl.
\end{gather*}
Hence MLE is a solution (under some conditions) of the system of equations
$$  
    \frac{1}{\ve^2\Dl} \sum_j (y_{j+1}-y_j-a(t_j,y_j;\al)\Dl)
        \dot{a}^{(i)}(t_j,y_j;\al)\Dl =0, \quad i=\ol{1,m},
$$
where the reconstructed data $\{y_j\}=\{y_{t_j^\ve}\}$ are substituted).

Following RT \cite{33} let us introduce the functionals
\begin{gather*}
    HW_\ve^{-1} :\wh\al_t^\ve(p) \to \left( y_{t_j^\ve}^{(p+1)},\;\;
        0\leq j\leq \left[\frac{t}{\ve}\right] \right), \\
    MLE_\ve :\left( y_{t_j^\ve}^{(p+1)},\;\;
        0\leq j\leq \left[\frac{t}{\ve}\right] \right)\to \wh\al_t^\ve(p+1) 
\end{gather*}
and 
$$  
    \phi_\ve=MLE_\ve \circ HW_\ve^{-1}.
$$

Starting with some constant initial value (or preliminary estimate obtained, e.g., from historical data) one can compute a sequence of estimates 
$$  
    \wh\al_t^\ve(p+1) =\phi_\ve(\wt\al_t^\ve(p)), \quad p\geq 1.
$$
If the operator $\phi_\ve$ is a strong contraction in the neighborhood of the true value of the parameter $\al^0$, for a small enough $\ve$, then one can define the estimate $\wh\al_t^\ve$ as the limits of the sequence $\{\wh \al_t^\ve(p)\}_{p\geq 1}$. It was proved in RT \cite{33} that $\wh \al_t^\ve$ is a strong consistent estimate of the parameter $\al$. 

Return to our consideration.

Interpolating on some way the corresponding (to the estimate $\wh \al_t^\ve$) realization $\{y_{t_j^\ve}\}$ we get the reconstructed continuous sample path $(y_s)_{0\leq s\leq t}$ of the latent process $Y$, which can be used for further analysis.

Unfortunately, both described statistical procedures are highly sensitive w.r.t errors in all steps of parameter identification process.

Hence, this is a natural place for introducing the robust procedure of parameter estimates.

Suppose that the sample path $(y_s)_{0\leq s\leq t}$ comes from the observation of process $(\wt Y_s)_{0\leq s\leq t}$ with distribution $\wt P_\al^\ve$ from the shrinking contamination neighborhood of the distribution $P_\al^\ve$ of the basic process $Y=(Y_s)_{0\leq s\leq t}$. That is 
\begin{equation}\label{4-1.4}
    \frac{d \wt P_\al^\ve}{d P_\al^\ve} \,\Big|\, \cF_t^w=\cE_t(\ve N^\ve),
\end{equation}
where $N^\ve=(N_s^\ve)_{0\leq s\leq t}$ is a $P_\al^\ve$-square integrable martingale, $\cE_t(M)$ is the Dolean exponential of martingale $M$.

In the diffusion-type framework (\ref{4-1.4}) represents the Huber gross error model (as it explain in Remark 2.2). The model of type (\ref{4-1.4}) of contamination of measures for statistical models with filtration was suggested by Lazrieva and Toronjadze \cite{21}, \cite{22}.

In Section 2 we study the problem of construction of robust estimates for contamination model (\ref{4-1.4}). 

In subsection 2.1 we give a description of the basic model and definition of consistent uniformly linear asymptotically normal (CULAN) estimates, connected with the basic model (Definition 2.1).

In subsection 2.2 we introduce a notion of shrinking contamination neighborhood, described in terms of contamination of nominal distribution, which naturally leads to the class of alternative measures (see (\ref{4-2.18}) and (\ref{4-2.19})). 

In subsection 2.3 we study the asymptotic behaviour of CULAN estimates under alternative measures (Proposition 2.2), which is the basis for the formulation of the optimization problem.

In subsection 2.4 the optimization problem is solved which leads to construction of optimal $B$-robust estimate (Theorem 2.1).

Based on the limit theorem (subsection 2.1), one can construct the asymptotic confidence region of level $\gm$ for unknown parameter $\al$
$$  
    \lim_{\ve\to \infty} P_\al^\ve \left(\ve^{-2}(\al-\al_t^{*,\ve})' V^{-1}
        (\psi^*;\al_t^{*,\ve})(\al-\al_t^{*,\ve}) \leq \chi_\gm^2\right)=1-\gm,
$$
where $\chi_\gm^2$ is a quantile of order $1-\gm$ of $\chi^2$-distribution with $m$ degree of freedom, and $V(\psi^*;\al)$ is given by (\ref{4-2.17}).

This region shrinks to the estimate $\al_t^{*,0}$, as $\ve\to 0$.

Now if the coefficient $a(t,y;\al)$ in (\ref{4-1.2}) is such that the solution $Y_t^\ve(\al)$ of SDE (\ref{4-1.2}) is continuous w.r.t parameter $\al$ (see, e.g., Krylov \cite{16}), then the confidence region of parameter $\al$ is mapped to the confidence interval for $Y_t^\ve(\al)$, which shrinks to $Y_t^*=Y_t^0(\al_t^{*,0})$, Further, by the function $f$, the latter interval is mapped to the confidence interval for $\sg_t$, which shrinks to $\sg_t^*=f^{1/2}(Y_t^0(\al_t^{*,0}))$. Denote $\sg_t^0$ the center of this interval. Then the interval can be written in the form 
$$  
    \sg_t=\sg_t^0 +\dl(\ve)h_t,
$$
where $\dl(\ve)\to 0$, as $\ve\to 0$, and $h\in \cH$ (see (\ref{4-3.18})). 

Thus, we arrive at the asset price model (\ref{4-1.2}) with misspecified volatility, and it is natural to consider the problem of construction of the robust trading strategy to hedge a contingent claim $H$. 

We investigate this problem in the mean-variance setting in Section 3. We consider the general situation, when the asset price is modelled by $d$-di\-men\-sional continuous semimartingale and the information structure is given by some general filtration. 

In subsection 3.1 we give a description of the financial market model. 

In subsection 3.2 we collect the facts concerning the variance-optimal equivalent local martingale measure, which plays a key role in the mean-variance hedging approach. 

In last subsection 3.3 we construct ``optimal robust hedging strategy'' (Theorem 3.1) by approximating the optimization problem (\ref{4-3.24}) by the problem (\ref{4-3.26}). As it is mentioned in Remark 3.2, such approach and term are common in robust statistic theory. In contact to optimal $B$-robustness (see Section 2), here we develop the approach, known in robust statistics as optimal $V$-robustness, see Hampel et al. \cite{12}.

Note that our approach allows incorporating current information on the underlying model, and hence is adaptive. Namely, passing from time value $t$ to $t+\tau$, $\tau>0$, when more information about market prices are available, the asymptotic variance-covariance of the constructed estimate $\al_t^{*,\ve}$ becomes smaller, and hence the estimation procedure becomes more precise. 

In the paper of Runggaldier and Zaccaria \cite{36} the adaptive approach to risk management under general uncertainty (restricted information) was developed. As it is mentioned in this paper there exist a series of investigations dealt with various type of adaptive approaches (see list of references in \cite{36}). But in all these papers (except Runggaldier and Zaccaria \cite{36}) the uncertainty is only in the stock appreciation rate in contrast to our consideration, where the model misspecification is due to the volatility parameter.

The consideration of misspecified asset price models was initiated by Avella\-neda et al. \cite{1}, Avellaneda and Paras \cite{2}.

Various authors in different settings attacked the robustness problem. The method used in Section 3 was suggested by Toronjadze \cite{42} for asset price process modelled by the one-dimensional process. As it will be shown in Remark 3.2 below, in simplest case when asset price process is a martingale w.r.t initial measure $P$, and it is possible to find the solution of ``exact'' optimization problem (\ref{4-3.24}), this solution coincides with the solution of an approximating optimization problem (\ref{4-3.26}). In more general situation (when asset price process is not more the $P$-martingale) investigation of the problem (\ref{4-3.24}) by, e.g., control theory methods seems sufficiently difficult. Anyway, we do not know the solution of the problem (\ref{4-3.24}).

Return to the stochastic volatility model (\ref{4-1.2}) and describe successive steps of our approach: 

1) For each current time value $t$, $0<t<T$, reconstruct the sample path $(y_s)_{0\leq s\leq t}$, using the historical data or the tradable derivatives prices;

2) Using the approach developed in Section 2, calculate the value $\al_t^{*,\ve}$ of the robust estimate of parameter $\al$ (i.e. construct the deterministic function $t\to \al_t^{*,\ve}\in R^m$) and then find the confidence region for parameter $\al$; 

3) Based on the volatility process model find the confidence interval for $Y_t(\al)$;

4) Denoting $a^*(t,y)=a(t,y;\al_t^{*,\ve})$, where $a(t,y;\al)$ is a drift coefficient of volatility process, consider the stochastic volatility model with misspecified asset price model and fully specified volatility process model
\begin{gather*}
    dX_t=X_t\, dR_t, \quad X_0>0, \\
    dR_t=(\sg_t^0+\dl(\ve)h_t) dM_t^0, \quad R_0=0, \\
    dY_t=a^*(t,Y_t)\,dt+\ve\,d w_t^\sg, \quad Y_0=0, \quad 0\leq t\leq T,
\end{gather*}
where
$$  
    dM_t^0=k_t\,dt +d w_t^R,
$$
$h\in \cH$ and $\sg_t^0$ is the center of the confidence interval of volatility.

Using Theorem 3.1 construct the optimal robust hedging strategy by the formula (\ref{4-3.38}),
$$  
    \tht_t^*=\frac{1}{\sg_t^0}\, \left[ \psi_t^{1,H} +\zt_t(V_t^*-
        (\psi_t^H)'U_t\right],
$$
where all objects are defined in Theorem 3.1.
\hfill \qed

It should be mentioned that if one constructs a hedging strategy $\wt\tht_t^*$ by the above-given formula with $\sg_t^{*,\ve}=f^{1/2}(Y_t^\ve(\al_t^{*,\ve}))$ instead of $\sg_t^0$, then the strategies $\wt\tht_t^*$ and $\tht_t^*$ would be different, since $\sg_t^{*,\ve}\neq \sg_t^0$, in general. Hence the value $\Dl_t=|\sg_t^{*,\ve}-\sg_t^0|$ defines the correction term between the robust, $\tht_t^*$ and non-robust, $\wt\tht^*$ strategies.

In nontrivial case, when $k_t=k(Y_t)$ the variance-optimal martingale measure $\wt P$ is given by (\ref{4-3.17}), $\zt_t=-k_t\cE_t(-k\cdot M^0)$ (see subsection 3.2), and the process $(X_t,Y_t)_{0\leq t\leq T}$ is the Markov process. If $H=h(X_T,Y_T)$ $(h(x,y)$ is some function), then $\wt V_t^H=E^{\wt P}(H|\cF_t^w)=E^{\wt P}(h(X_T,Y_T)|\cF_t^w)=v(t,X_T,Y_T)$ and if, e.g., $v(t,x,y)\in C^{1,2,2}$, then $v$ is an unique solution of the following partial differential equation 
$$  
    \frac{\pa v}{\pa t} +a^*\,\frac{\pa v}{\pa y}+\frac{1}{2} 
        \left(\ve^2\,\frac{\pa^2v}{\pa y^2}+x^2v^2\,\frac{\pa^2 v}{\pa x^2}\right)=0,
$$
with the boundary condition $v(T,x,t)=h(x,y)$. More general situation with nonsmooth $v$ is considered in Laurent and Pham \cite{19}, Mania and Tevzadze \cite{25}.

Further, one can find the Galtchouck--Kunita--Watanabe decomposition of r.v. $H$ (see, e.g., Pham  et al. \cite{32}) putting
$$  
    \xi_t^H=\frac{\pa v(t,X_t,Y_t)}{\pa x}\,, \quad 
    L_T^H=\ve \int_0^T \frac{\pa v}{\pa y}\,(t,X_t,Y_t)\,d w_t^\sg,
$$
and calculate $\psi_t^H$, $L_T$ and $V_t^*$ using (4.13) and (4.14) of RSch \cite{34}. 

Thus one get the explicit solution of the mean-variance hedging problem.

Finally, here is the short summary of approach:

a) Incorporate the robust procedure in statistical analysis of volatility process. That is construct and use in the model optimal $B$-robust estimate of unknown parameter in drift  coefficient of volatility process.

Parameter estimation naturally leads to the asset price model misspecification. 

b) Incorporate the second robust procedure in financial analysis of contingent claim hedging. That is construct and use for hedging purposes optimal $V$-robust trading strategy.

In our opinion this ``double robust'' strategy should be more attractive to protect the hedger against the possible errors.

The general asymptotic theory of estimation can be found in Ibragimov and Khas'miskii \cite{14}; the theory of robust statistics is developed in Hampel et al. \cite{12} and in Rieder \cite{35}; the theory of the trend parameter estimates for diffusion process with small noise is developed in Kutoyants \cite{18}; the book of Musiela and Rutkowsky \cite{30} is devoted to the mathematical theory of finance and finally, the general theory of martingales can be found in Jacod and Shiryaev \cite{15}.

\section{Optimal $B$-Robust Estimates}

\subsection{Basic model. CULAN estimates}
The basic model of observations is described by the SDE
\begin{equation}\label{4-2.1}
    d Y_s=a(s,Y;\al)\,ds +\ve\,dw_s, \quad Y_0=0, \quad 0\leq s\leq t,
\end{equation}
where $t$ is a fixed  number, $w=(w_s)_{0\leq s\leq t}$ is a standard Wiener process defined on the filtered probability space $(\Om,\cF,F=(\cF_s)_{0\leq s\leq t},P)$ satisfying the usual conditions, $\al=(\al_1,\dots,\al_m)$, $m\geq 1$, is an unknown parameter to be estimated, $\al\in \cA\sbs R^m$, $\cA$ is an open subset of $R^m$, $\ve$, $0<\ve\ll 1$, is a small parameter (index of series). In our further considerations all limits correspond to $\ve\to 0$.

Denote $(C_t,\cB_t)$ a measurable space of continuous on $[0,t]$ functions \linebreak $x=(x_s)_{0\leq s\leq t}$ with $\sg$-algebra $\cB_t=\sg(x:x_s,s\leq t)$. Put $\cB_s=\sg(x:x_u,u\leq s)$.

Assume that for each $\al\in \cA$ the drift coefficients $a(s,x;\al)$, $0\leq s\leq t$, $x\in C_t$ is a known nonanticipative (i.e. $\cB_s$-measurable for each $s$, $0\leq s\leq t$) functional satisfying the functional Lipshitz and linear growth conditions $\bL$:
\begin{gather*}
    |a(s,x^1;\al)-a(s,x^2;\al)| \leq L_1 \int_0^s |x_u^1-x_u^2|\,d k_u +
        L_2|x_s^1-x_s^2|, \\
    |a(s,x;\al)|\leq L_1 \int_0^s (1+|x_u|)\,d k_u+L_2(1+|x_s|),
\end{gather*}
where $L_1$ and $L_2$ are constants, which do not depend on $\al$, $k=(k(s))_{0\leq s\leq t}$ is a non-decreasing right-continuous function, $0\leq k(s)\leq k_0$, $0:k_0<\infty$, $x^1,x^2\in C_t$. 

Then, as it is well-known (see, e.g., Liptser and Shiryaev \cite{24}), for each $\al\in \cA$ the equation (\ref{4-2.1}) has an unique strong solution $Y^\ve(\al)=(Y_s^\ve(\al))_{0\leq s\leq t}$, and in addition (see Kutoyants \cite{18})
$$  
    \sup_{0\leq s\leq t} |Y_s^\ve(\al)-Y_s^0(\al)| \leq 
        C\ve \sup_{0\leq s\leq t} |w_s| \;\;\Pas,
$$
with some constant $C=C(L_1,L_2,k_0,t)$, where $Y^0(\al)=(Y_s^0(\al))_{0\leq s\leq t}$ is the solution of the following nonperturbated differential equation 
\begin{equation}\label{4-2.2}
    d Y_s=a(s,Y;\al)\,ds, \quad Y_0=0.
\end{equation}

Change the initial problem of estimation of parameter $\al$ by the equivalent one, when the observations are modelled according to the following SDE
\begin{equation}\label{4-2.3}
    d X_s=a_\ve(s,X;\al)\,ds +dw_s, \quad X_0=0,
\end{equation}
where $a_\ve(s,x;\al)=\frac{1}{\ve}\,a(s,\ve x;\al)$, $0\leq s\leq t$, $x\in C_t$, $\al\in \cA$.

It is clear that if $X^\ve(\al)=(X_s^\ve(\al))_{0\leq s\leq t}$ is the solution of SDE (\ref{4-2.3}), then for each $s\in [0,t]$ $\ve X_s^\ve(\al)=Y_s^\ve(\al)$.

Denote by $P_\al^\ve$ the distribution of process $X^\ve(\al)$ on the space $(C_t,\cB_t)$, i.e. $P_\al^\ve$ is the probability measure on $(C_t,\cB_t)$ induced by the process $X^\ve(\al)$. Let $P^w$ be a Wiener measure on $(C_t,\cB_t)$. Denote $X=(X_s)_{0\leq s\leq t}$ a coordinate process on $(C_t,\cB_t)$, that is $X_s(x)=x_s$, $x\in C_t$.

The conditions $\bL$ guarantee that for each $\al\in \cA$ the measures $P_\al^\ve$ and $P^w$ are equivalent $(P_\al^\ve\sim P^w)$, and if we denote $z_s^{\al,\ve}=\frac{dP_\al^\ve}{dP^w}|\cB_s$ the density process (likelihood ratio process), then 
$$ 
    z_s^{\al,\ve}(X)=\cE_s(a_\ve(\al) \cdot X) :=
        \exp \Bigg\{ \int_0^s a_\ve(u,X;\al)\, dX_u-
        \frac{1}{2} \int_0^s a_\ve^2(u,X;\al)\,du\Bigg\}.
$$

Introduce class $\Psi$ of $R^m$-valued nonanticipative functionals $\psi$, $\psi:[0,t]\times C_t\times \cA \to R^m$ such that for each $\al\in \cA$ and $\ve>0$
\begin{align}
    1) \quad & E_\al^\ve \int_0^t |\psi(s,X;\al)|^2ds<\infty, \label{4-2.4} \\
    2) \quad & \int_0^t |\psi(s,Y^0(\al);\al)|^2ds<\infty, \label{4-2.5} \\
    3) \quad & \text{uniformly in $\al$ on each compact $K\sbs \cA$} \notag \\
    & P_\al^\ve -\lim_{\ve\to 0} \int_0^t |\psi(s,\ve X;\al)-
        \psi(s,Y^0(\al);\al)|^2 ds=0, \label{4-2.6}
\end{align}
where $|\cdot|$ is an Euclidean norm in $R^m$, $P_\al^\ve-\lim\limits_{\ve\to 0} \zt_\ve=\zt$ denotes the convergence $P_\al^\ve\{|\zt_\ve-\zt|>\rho\}\to 0$, as $\ve\to 0$, for all $\rho$, $\rho>0$.

Assume that for each $s\in [0,t]$ and $x\in C_t$ the functional $a(s,x;a)$ is differentiable in $\al$ and gradient $\dot{a}=\big(\frac{\pa}{\pa \al_1}\,a,\dots,\frac{\pa}{\pa\al_m}\,a\big)'$ belongs to $\Psi$ $(\dot{a}\in \Psi)$, where the sign ``$'$'' denoted a transposition. 

Then the Fisher information process 
$$  
    I_s^\ve(X;\al) :=\int_0^s \dot{a}_\ve(u,X;\al) 
        [\dot{a}_\ve(u,X;\al)]'du, \quad 0\leq s\leq t,
$$
is well-defined and, moreover, uniformly in $\al$ on each compact 
\begin{equation}\label{4-2.7}
    P_\al^\ve-\lim_{\ve\to 0} \ve^2 I_t^\ve(\al)=I^0(\al),
\end{equation}
where
$$  
    I^0(\al):= \int_0^t \dot{a}(s,Y^0(\al);\al) 
        [\dot{a}(s,Y^0(\al);\al]'ds.
$$

For each $\psi\in \Psi$, introduce the functional $\psi_\ve(s,x;\al):=\frac{1}{\ve} \,\psi(s,\ve x;\al)$ and matrices $\Gm_\ve^\psi(\al)$ and $\gm_\ve^\psi\al$:
\allowdisplaybreaks
\begin{align}
    \Gm_\ve^\psi(X;\al) & := \int_0^t \psi_\ve(s,X;\al) [\psi_\ve(s,X;\al)]'ds,
        \label{4-2.8} \\
    \gm_\ve^\psi(X;\al) & := \int_0^t \psi_\ve(s,X;\al) [\dot{a}_\ve(s,X;\al)]'ds.
        \label{4-2.9} 
\end{align}

Then from (\ref{4-2.6}) it follows that uniformly in $\al$ on each compact 
\begin{align}
    & P_\al^\ve -\lim_{\ve\to 0} \ve^2 \Gm_\ve^\psi(\al) =\Gm_0^\psi(\al),
        \label{4-2.10} \\
    & P_\al^\ve -\lim_{\ve\to 0} \ve^2 \gm_\ve^\psi(\al) =\gm_0^\psi(\al),
        \label{4-2.11} 
\end{align}
where the matrices $\Gm_0^\psi(\al)$ and $\gm_0^\psi(\al)$ are defined as follows
\begin{align}
    \Gm_0^\psi(\al) & := \int_0^t \psi(s,Y^0(\al);\al) [\psi(s,Y^0(\al);\al)]'ds,
        \label{4-2.12} \\
    \gm_0^\psi(\al) & := \int_0^t \psi(s,Y^0(\al);\al) 
        [\dot{a}(s,Y^0(\al);\al)]'ds.
        \label{4-2.13} 
\end{align}

Note that,by virtue of (\ref{4-2.4}), (\ref{4-2.5}) and $\dot{a}\in \Psi$, matrices given by (\ref{4-2.8}), (\ref{4-2.9}), (\ref{4-2.12}) and (\ref{4-2.13}) are well-defined.

Denote $\Psi_0$ the subset of $\Psi$ such that for each $\psi\!\in \!\Psi_0$ and $\al\in \cA$, $\rank \Gm_0^\psi(\al)\!=m$ and $\rank \gm_0^\psi(\al)=m$. 

Assume that $\dot{a}\in \Psi_0$.

For each $\psi\in \Psi_0$, define a $P_\al^\ve$-square integrable martingale $L^{\psi,\ve}(\al)= \linebreak (L_s^{\psi,\ve}(\al))_{0\leq s\leq t}$ as follows
\begin{equation}\label{4-2.14}
    L_s^{\psi,\ve}(X;\al) =\int_0^s \psi_\ve(u,X;\al) (d X_u-\al_\ve(u,X;\al)\,du).
\end{equation}

Now we give a definition of CULAN $M$-estimates.

\begin{definition}\label{d4-2.1}
An estimate $(\al_t^{\psi,\ve})_{\ve>0} =(\al_{1,t}^{\psi,\ve}, \dots, \al_{m,t}^{\psi,\ve})_{\ve>0}'$, $\psi\in \Psi_0$, is called consistent uniformly lineal asymptotically normal (CULAN) if it admits the following expansion
\begin{equation}\label{4-2.15}
    \al_t^{\psi,\ve}=\al+[\gm_0^\psi(\al)]^{-1} \ve^2 L_t^{\psi,\ve}(\al) +
        r_{\psi,\ve}(\al),
\end{equation}
where uniformly in $\al$ on each compact
\begin{equation}\label{4-2.16}
    P_\al^\ve-\lim_{\ve\to 0} \ve^{-1} r_{\psi,\ve}(\al)=0.
\end{equation}
\end{definition}

It is well-known (see Lazrieva, Toronjadze \cite{20}) that under the above conditions uniformly in $\al$ on each compact 
$$  
    \cL\{\ve^{-1}(\al_t^{\psi,\ve}-\al) \mid P_\al^\ve\} \str{w}{\to} 
        N(0,V(\psi;\al)),
$$
with
\begin{equation}\label{4-2.17}
    V(\psi;\al):=[\gm_0^\psi(\al)]^{-1} \Gm_0^\psi(\al) 
        ([\gm_0^\psi(\al)]^{-1})',
\end{equation}
where $\cL(\zt|P)$ denotes the distribution of random vector $\zt$ calculated under measure $P$, symbol ``$\str{w}{\to}$'' denotes the weak convergence of measures, \linebreak  $N(0,V(\psi;\al))$ is a distribution of Gaussian vector with zero mean and covariance matrix $V(\psi;\al)$.

\begin{remark}\label{r4-2.1}
In context of diffusion type processes the $M$-estimate $(\al_t^{\psi,\ve})_{\ve>0}$ is defined as a solution of the following stochastic equation 
$$  
    L_t^{\psi,\ve}(X;\al)=0,
$$
where $L_t^{\psi,\ve}(X;\al)$ is defined by (\ref{4-2.14}), $\psi\in \Psi_0$.
\end{remark}

The asymptotic theory of $M$-estimates for general statistical models with filtration is developed in Chitashvili et al. \cite{7}. Namely, the problem of existence and global behaviour of solutions is studied. In particular, the conditions of regularity and ergodicity type are established, under which $M$-estimates have a CULAN property. 

For our model, in case when $\cA=R^m$, the sufficient conditions for CULAN property take the form: 

(1) for all $s$, $0\leq s\leq t$, and $x\in C_t$ the functionals $\psi(s,x;\al)$ and $\dot{a}(s,x;\al)$ are twice continuously differentiable in $\al$ with bounded derivatives satisfying the functional Lipshitz conditions with constants, which do not depend on $\al$. 

(2) the equation (w.r.t $y$)
$$  
    \Dl(\al,y):= \int_0^t \psi(s,Y^0(\al);y) 
        (a(s,Y^0(\al);\al)-a(s,Y^0(\al);y))\,ds=0,
$$
has an unique solution $y=\al$.

The MLE is a special case of $M$-estimates when $\psi=\dot{a}$.

\begin{remark}\label{r4-2.2}
According to (\ref{4-2.7}) the asymptotic covariance matrix of MLE $(\wh\al_t^\ve)_{\ve>0}$ is $[I_0(\al)]^{-1}$. By the usual technique one can show that for each $\al\in \cA$ and $\psi\in \Psi_0$, $I_0^{-1}(\al)\leq V(\psi,\al)$ (see (\ref{4-2.17})), where for two symmetric matrices $B$ and $C$ the relation $B\leq C$ means that the mattix $C-B$ is nonnegative definite.
\end{remark}

Thus, the MLE has a minimal covariance matrix among all $M$-estimates.

\subsection{Shrinking contamination neighborhoods}
In this subsection we give a notion of a contamination of the basic model (\ref{4-2.3}), described in terms of shrinking neighborhoods of basic measures $\{P_\al^\ve$, $\al\in \cA$, $\ve>0\}$, which is an analog of the Huber gross error model (see, e.g., Hampel et.al. \cite{12} and also, Remark \ref{r4-2.3} below). 

Let $\cH$ be a family of bounded nonanticipative functionals $h:[0,t]\times C_t\times \cA\to R^1$ such that for all $s\in [0,t]$ and $\al\in \cA$ the functional $h(s,x;\al)$ is continuous at the point $x_0=Y^0(\al)$.

Let for each $h\in \cH$, $\al\in\cA$ and $\ve>0$, $P_\al^{\ve,h}$ be a measure on $(C_t,\cB_t)$ such that 
\begin{align}
    1) \qquad & P_\al^{\ve,h}\sim P_\al^\ve, \notag  \\ 
    2) \qquad & \frac{dP_\al^{\ve,h}}{d P_\al^\ve} =\cE_t(\ve N_\al^{\ve,h}),
        \label{4-2.18}
\end{align}
where 
\begin{equation}\label{4-2.19}
    3) \qquad N_{\al,s}^{\ve,h}:= \int_0^s h_s(u,X;\al) 
        (dX_u-a_\ve(u,X;\al)\,du),
\end{equation}
with $h_\ve(s,x;\al):=\frac{1}{\ve}\,h(s,\ve x;\al)$, $0\leq s\leq t$, $x\in C_t$.

Denote $\bP_\al^{\ve,\cH}$ a class of measures $P_\al^{\ve,h}$, $h\in \cH$, that is 
$$  
    \bP_\al^{\ve,\cH} =\{P_\al^{\ve,h};\;\;h\in \cH\}.
$$
We call $(\bP_\al^{\ve,\cH})_{\ve>0}$ a shrinking contamination neighborhoods of the basic measures $(P_\al^\ve)_{\ve>0}$, and the element $(P_\al^{\ve,h})_{\ve>0}$ of these neighborhoods is called alternative measure (or simply alternative).

Obviously for each $h\in \cH$ and $\al\in \cA$, the process $N_\al^{\ve,h}=(N_{\al,s}^{\ve,h})_{0\leq s\leq t}$ defined by (\ref{4-2.19}) is a $P_\al^\ve$-square integrable martingale. Since under measure $P_\al^\ve$ the process $\ol{w}=(\ol{w}_s)_{0\leq s\leq t}$ defined as 
$$  
    \ol{w}_s:= X_s-\int_0^s a_\ve(u,X;\al)\,du, \quad 0\leq s\leq t,
$$
is a Wiener process, then by virtue of the Girsanov Theorem the process $\wt w:=\ol{w}+\la \ol{w},\ve N_\al^{\ve,h}\ra$ is a Wiener process under changed measure $P_\al^{\ve,h}$. But by the definition 
$$  
    \wt w_s=X_s-\int_0^s (a_\ve(u,X;\al)+\ve h_\ve(u,X;\al))\,du,
$$
and hence, one can conclude that $P_\al^{\ve,h}$ is a weak solution of SDE
$$  
    d X_s=(a_\ve(s,X;\al)+\ve h_\ve(s,X;\al))\,ds +d w_s, \quad X_0=0.
$$ 

This SDE can be viewed as a ``small'' perturbation of the basic model (\ref{4-2.3}).

\begin{remark}\label{r4-2.3}
1) In the case of i.i.d. observations $X_1,X_2,\dots,X_n$, $n\geq 1$, the Huber gross error model in shrinking setting is defined as follows
$$  
    f^{n,h}(x;\al):=(1-\ve_n)f(x;\al)+\ve_n h(x;\al),
$$
where $f(x;\al)$ is a basic (core) density of distribution of r.v. $X_i$ (w.r.t some dominating measure $\mu$), $h(x;\al)$ is a contaminating density, $f^{n,h}(x;\al)$ is a contaminated density, $\ve_n=O(n^{-1/2})$. If we denote by $P_\al^n$ and $P_\al^{n,h}$ the measures on $(R^n,\cB(R^n))$, generated by $f(x;\al)$ and $f^{n,h}(x;\al)$, respectively, then 
$$  
    \frac{dP_\al^{n,h}}{d P_\al^n} =\prod_{i=1}^n 
        \frac{f^{n,h}(X_i;\al)}{f(X_i;\al)} =
        \prod_{i=1}^n (1+\ve_n H(X_i;\al))=\cE_n(\ve_n\cdot N_\al^{n,h}),
$$
where $H=\frac{h-f}{f}$, $N_\al^{n,h}=(N_{\al,m}^{n,h})_{1\leq m\leq n}$, $N_{\al,m}^{n,h}=\sum\limits_{i=1}^m H(X_i;\al)$, $N_\al^{n,h}$ is a $P_\al^n$-martingale, $\cE_n(\ve_n N_\al^{n,h})=\prod\limits_{i=1}^n (1+\ve_n\Dl N_{\al,i}^{n,h})$ is the Dolean exponential in discrete time case.

Thus 
\begin{equation}\label{4-2.20}
    \frac{dP_\al^{n,h}}{d P_\al^n}=\cE(\ve_n\cdot N_\al^{n,n}),
\end{equation}
and the relation (\ref{4-2.18}) is a direct analog of (\ref{4-2.20}).

2) The concept of shrinking contamination neighborhoods, expressed in the form of (\ref{4-2.18}) was proposed in Lazrieva and Toronjadze \cite{21} for more general situation, concerning with the contamination areas for semimartingale statistical models with filtration.
\hfill \qed 
\end{remark}

Note here that the power of the small parameter $\ve$ is crucial. One cannot consider the perturbation of measure with different power of $\ve$ if he/she wish to get nontrivial result.

In the remainder of this subsection we study the asymptotic properties of CULAN estimates under alternatives.

For this aim we first consider the problem of contiguity of measures $(P_\al^{\ve,h})_{\ve>0}$ to $(P_\al^\ve)_{\ve>0}$.

Let $(\ve_n)_{n\geq 1}$, $\ve_n \downarrow 0$, and $(\al_n)_{n\geq 1}$, $\al_n\in K$, $K\sbs \cA$ is a compact, be arbitrary sequences.

\begin{proposition}\label{p4-2.1}
For each $h\in \cH$ the sequence of measures $(P_{\al_n}^{\ve_n,h})$ is contiguous to sequence of measures $(P_{\al_n}^{\ve_n})$, i.e.
$$  (P_{\al_n}^{\ve_n,h}) \vartriangleleft (P_{\al_n}^{\ve_n}). $$
\end{proposition}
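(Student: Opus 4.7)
The plan is to show the stronger property that the densities $L_n:=dP_{\al_n}^{\ve_n,h}/dP_{\al_n}^{\ve_n}$ form an $L^p(P_{\al_n}^{\ve_n})$-bounded sequence for some $p>1$, which is classically known to imply contiguity via Le Cam's first lemma (since $L^p$-boundedness gives uniform integrability, and then $P_{\al_n}^{\ve_n}(A_n)\to 0$ forces $E_{P_{\al_n}^{\ve_n}}[L_n \mathbf 1_{A_n}]\to 0$ by H\"older).

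First, I would rewrite $N_{\al,s}^{\ve,h}$ as a stochastic integral with respect to the $P_\al^\ve$-Wiener process $\ol w_s=X_s-\int_0^s a_\ve(u,X;\al)\,du$; namely, $N_{\al,s}^{\ve,h}=\int_0^s h_\ve(u,X;\al)\,d\ol w_u$, so that
$$
\ve N_{\al,s}^{\ve,h}=\int_0^s \ve\, h_\ve(u,X;\al)\,d\ol w_u=\int_0^s h(u,\ve X;\al)\,d\ol w_u.
$$
Since $h\in\cH$ is bounded, say $|h|\le C_h$, the quadratic variation
$$
\la \ve N_\al^{\ve,h}\ra_s=\int_0^s h^2(u,\ve X;\al)\,du
$$
is bounded \emph{deterministically} by $C_h^2 t$ on $[0,t]$. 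Thus Novikov's condition holds trivially, so $\cE_\cdot(\ve N_\al^{\ve,h})$ is a genuine uniformly integrable $P_\al^\ve$-martingale with $E_\al^\ve\cE_t(\ve N_\al^{\ve,h})=1$, which is consistent with (\ref{4-2.18}).

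Second, for any $p>1$ I would use the identity
$$
\bigl[\cE_t(\ve N_\al^{\ve,h})\bigr]^p=\cE_t(p\ve N_\al^{\ve,h})\cdot\exp\!\Bigl(\tfrac{p(p-1)}{2}\,\la \ve N_\al^{\ve,h}\ra_t\Bigr).
$$
Again because $p h$ is bounded, Novikov applies to $\cE_t(p\ve N_\al^{\ve,h})$, so $E_\al^\ve\cE_t(p\ve N_\al^{\ve,h})=1$. Combined with the pointwise estimate $\la \ve N_\al^{\ve,h}\ra_t\le C_h^2 t$, this gives
$$
\sup_n E_{P_{\al_n}^{\ve_n}}\bigl[(L_n)^p\bigr]\le\exp\!\Bigl(\tfrac{p(p-1)}{2}\,C_h^2\, t\Bigr)<\infty.
$$
Note that this bound is independent of the sequences $(\ve_n)$ and $(\al_n)\in K$, which is exactly what is needed.

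Third, I would conclude by the standard argument: if $A_n\in\cB_t$ with $P_{\al_n}^{\ve_n}(A_n)\to 0$, then with $q$ the conjugate exponent of $p$,
$$
P_{\al_n}^{\ve_n,h}(A_n)=E_{P_{\al_n}^{\ve_n}}[L_n\mathbf 1_{A_n}]\le \|L_n\|_{L^p(P_{\al_n}^{\ve_n})}\cdot P_{\al_n}^{\ve_n}(A_n)^{1/q}\to 0,
$$
which is the definition of contiguity $(P_{\al_n}^{\ve_n,h})\vartriangleleft(P_{\al_n}^{\ve_n})$. The only subtlety is really checking that the martingale $\cE_\cdot(p\ve N^{\ve,h})$ has expectation $1$ for every $p$; this is the one place where boundedness of $h$ (built into the definition of $\cH$) plays an essential role, but it is not a genuine obstacle. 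Everything else is a mechanical application of Novikov and H\"older.
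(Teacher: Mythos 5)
Your proof is correct, but it takes a genuinely different route from the paper. The paper invokes the predictable criterion for contiguity from Jacod and Shiryaev and computes the Hellinger process of order $\frac12$, which turns out to be $\frac18\int_0^t[h(s,\ve_nX;\al_n)]^2\,ds$; boundedness of $h\in\cH$ makes this deterministically bounded, and the criterion applies. You instead prove the stronger statement that the likelihood ratios $L_n$ are bounded in $L^p(P_{\al_n}^{\ve_n})$ uniformly in $n$, via the identity $[\cE_t(M)]^p=\cE_t(pM)\exp\bigl(\tfrac{p(p-1)}{2}\la M\ra_t\bigr)$ together with Novikov's condition, and then conclude by H\"older. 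Note that both arguments ultimately rest on exactly the same fact --- the deterministic bound $\la\ve N_\al^{\ve,h}\ra_t=\int_0^t h^2(s,\ve X;\al)\,ds\le C_h^2t$, of which the paper's Hellinger process is just a constant multiple --- but the surrounding machinery differs. The paper's route is shorter once the Hellinger criterion is taken off the shelf; yours is more elementary and self-contained (only Novikov and H\"older), gives a quantitative bound independent of $(\ve_n)$ and $(\al_n)\in K$, and yields the strictly stronger conclusion of uniform $L^p$-boundedness (hence uniform integrability) of the densities, which could be reused elsewhere. Your reduction of $\ve N_{\al,s}^{\ve,h}$ to $\int_0^s h(u,\ve X;\al)\,d\ol w_u$ correctly resolves the small typo in the paper's formula (\ref{4-2.19}), where $h_s$ should read $h_\ve$.
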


\begin{proof}
From the predictable criteria of contiguity (see, e.g., Jacod and Shiryaev \cite{15}), follows that we have to verify the relation 
\begin{equation}\label{4-2.21}
    \lim_{N\to \infty} \limsup_{n\to \infty} 
        P_{\al_n}^{\ve_n,h} \left\{ h_t^n \left(\frac{1}{2}\right)>N\right\}=0,
\end{equation}
where $h^n(\frac{1}{2})=(h_s^n(\frac{1}{2}))_{0\leq s\leq t}$ is the Hellinger process of order $\frac{1}{2}$.

By the definition of Hellinger process (see, e.g., Jacod and Shiryaev \cite{15}) we have
$$  
    h_t^n\left(\frac{1}{2}\right) =h_t^n \left(\frac{1}{2}\,, 
        P_{\al_n}^{\ve_n,h}, P_{\al_n}^{\ve_n}\right) =
        \frac{1}{8} \int_0^t \left[ h(s,\ve_n X;\al_n)\right]^2ds,
$$
and since $h\in \cH$, and hence is bounded, $h_t^n(\frac{1}{2})$ is bounded too, which provides (\ref{4-2.21}).
\end{proof}

\begin{proposition}\label{p4-2.2}
For each estimate $(\al_t^{\ve,\psi})_{\ve>0}$ with $\psi\in \Psi_0$ and each alternative $(P_\al^{\ve,h})_{\ve>0} \in (\bP_\al^{\ve,\cH})_{\ve>0}$ the following relation holds true
$$  
    \cL\left\{ \ve^{-1} (\al_t^{\psi,\ve}-\al) \mid P_\al^{\ve,h}\right\} 
        \str{w}{\to} N\left( [\gm_0^\psi(\al)]^{-1} b(\psi,h;\al), V(\psi,\al)\right),
$$
where
$$  
    b(\psi,h;\al):=\int_0^t \psi(s,Y^0(\al);\al) h(s,Y^0(\al);\al)\,ds.
$$
\end{proposition}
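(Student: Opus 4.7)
The plan is to combine the CULAN expansion (\ref{4-2.15}) with a Girsanov transformation, and then apply the martingale CLT. Since $\psi\in\Psi_0$, the estimate admits
\[
    \ve^{-1}(\al_t^{\psi,\ve}-\al)=[\gm_0^\psi(\al)]^{-1}\,\ve L_t^{\psi,\ve}(\al) + \ve^{-1}r_{\psi,\ve}(\al),
\]
with $\ve^{-1}r_{\psi,\ve}(\al)\to 0$ in $P_\al^\ve$-probability by (\ref{4-2.16}). Proposition \ref{p4-2.1} yields $(P_\al^{\ve,h})\vartriangleleft(P_\al^\ve)$, so the remainder still tends to $0$ in $P_\al^{\ve,h}$-probability. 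Thus, by Slutsky, everything reduces to identifying the weak limit of $\ve L_t^{\psi,\ve}(\al)$ under the alternative.

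Next I would rewrite $L_t^{\psi,\ve}$ in terms of the $P_\al^{\ve,h}$-Wiener process. From the discussion following (\ref{4-2.19}), under $P_\al^{\ve,h}$ the process $\wt w_s=X_s-\int_0^s(a_\ve(u,X;\al)+\ve h_\ve(u,X;\al))\,du$ is a standard Wiener process. Substituting $dX_s=a_\ve(s,X;\al)\,ds+\ve h_\ve(s,X;\al)\,ds+d\wt w_s$ into (\ref{4-2.14}) gives
\[
    \ve L_t^{\psi,\ve}(\al)
    = \ve^2\!\int_0^t \psi_\ve(s,X;\al)h_\ve(s,X;\al)\,ds
        + \ve\!\int_0^t \psi_\ve(s,X;\al)\,d\wt w_s.
\]
The drift term equals $\int_0^t\psi(s,\ve X;\al)h(s,\ve X;\al)\,ds$. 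Using that $\sup_{s\leq t}|\ve X_s^\ve(\al)-Y_s^0(\al)|\to 0$ $P_\al^\ve$-a.s., the continuity of $h$ at $Y^0(\al)$, boundedness of $h\in\cH$, and property (\ref{4-2.6}) for $\psi$, one gets convergence in $P_\al^\ve$-probability to $b(\psi,h;\al)$; by contiguity this carries over to $P_\al^{\ve,h}$.

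The martingale term is $M_t^\ve:=\ve\int_0^t\psi_\ve(s,X;\al)\,d\wt w_s$, a continuous $P_\al^{\ve,h}$-martingale with quadratic characteristic $\ve^2\Gm_\ve^\psi(X;\al)=\int_0^t[\psi(s,\ve X;\al)]^2\,ds$. By the same argument as for the drift (replacing $h$ by $\psi$), this characteristic converges in $P_\al^{\ve,h}$-probability to $\Gm_0^\psi(\al)$. The CLT for continuous local martingales (e.g.\ Jacod--Shiryaev \cite{15}) then gives $M_t^\ve\str{w}{\to}N(0,\Gm_0^\psi(\al))$ under $P_\al^{\ve,h}$. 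Combining the deterministic drift limit with this martingale CLT yields
\[
    \ve L_t^{\psi,\ve}(\al)\str{w}{\to}N\bigl(b(\psi,h;\al),\Gm_0^\psi(\al)\bigr),
\]
and multiplying by $[\gm_0^\psi(\al)]^{-1}$ gives the stated limit with covariance $V(\psi;\al)=[\gm_0^\psi(\al)]^{-1}\Gm_0^\psi(\al)([\gm_0^\psi(\al)]^{-1})'$.

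The main delicate point is justifying that the quadratic characteristic of $M^\ve$ converges \emph{under the alternative} $P_\al^{\ve,h}$; this is where contiguity is essential, since (\ref{4-2.6}) only supplies convergence in $P_\al^\ve$-probability. Once contiguity transports all three convergences (remainder, drift, and bracket) to $P_\al^{\ve,h}$, the martingale CLT and Slutsky finish the argument without further subtlety.
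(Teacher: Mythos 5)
Your proof is correct and follows essentially the same route as the paper: the CULAN expansion plus contiguity (Proposition \ref{p4-2.1}) to kill the remainder, the Girsanov canonical decomposition of $L^{\psi,\ve}$ into a $P_\al^{\ve,h}$-martingale plus the drift $\ve^2\int_0^t\psi_\ve h_\ve\,ds$, convergence of the bracket to $\Gm_0^\psi(\al)$ transported to the alternative by contiguity, and the martingale CLT. You are in fact slightly more explicit than the paper about why the drift term converges to $b(\psi,h;\al)$ (continuity of $h$ at $Y^0(\al)$ and boundedness), which is a welcome addition rather than a deviation.
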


\begin{proof}
Proposition \ref{p4-2.1} together with (\ref{4-2.16}) provides that uniformly in $\al$ on each compact 
$$  
    P_\al^{\ve,h} -\lim_{\ve\to 0} \ve^{-1} r_{\psi,\ve}(\al)=0,
$$ 
and therefore we have to establish the limit distribution of random vector $[\gm_0^\psi(\al)]^{-1}\ve L_t^{\psi,\ve}$ under the measures $(P_\al^{\ve,h})_{\ve>0}$.

By virtue of the Girsanov Theorem the process $L^{\psi,\ve}(\al)=(L_s^{\psi,\ve}(\al))_{0\leq s\leq t}$ is a semimartingale with canonical decomposition
\begin{equation}\label{4-2.22}
    L_s^{\psi,\ve}(\al) =\wt L_s^{\psi,\ve}(\al)+b_{\ve,s}(\psi,h;\al), 
        \quad 0\leq s\leq t,
\end{equation}
where $\wt L^{\psi,\ve}(\al)=(\wt L_s^{\psi,\ve}(\al))_{0\leq s\leq t}$ is a $P_\al^{\ve,h}$-square integrable martingales defined as follows
$$  
    \wt L_s^{\psi,\ve}(X;\al):=\int_0^s \psi_\ve(u,X;\al)\,
        \left(d X_u-(a_\ve(u,X;\al)+\ve h_\ve(u,X;\al)\right)\,du,
$$
and 
$$   
    b_{\ve,s}(\psi,h;\al):= \ve \int_0^s \psi_\ve (u,X;\al) h_\ve(u,X;\al)\,du.
$$ 

But $\la \wt L^{\psi,\ve}(\al)\ra_t=\Gm_\ve^\psi(\al)$, where $\Gm_\ve^\psi(\al)$ is defined by (\ref{4-2.8}). On the other hand, from Proposition \ref{p4-2.1} and (\ref{4-2.10}) it follows that 
$$ 
    P_\al^{\ve,h}-\lim_{\ve\to 0} \la \ve \wt L^{\psi,\ve}(\al)\ra_t =
        P_\al^{\ve,h}-\lim_{\ve\to 0} \ve^2 \Gm_\ve^\psi(\al) =
        P_\al^\ve -\lim_{\ve\to 0} \ve^2 \Gm_\ve^\psi(\al)=\Gm_0^\psi(\al)
$$
uniformly in $\al$ on each compact, and hence 
\begin{equation}\label{4-2.23}
    \cL \left\{ [\gm_0^\psi(\al)]^{-1} \ve \wt L_t^{\psi,\ve}(\al) \mid 
        P_\al^{\ve,h}\right\} \str{w}{\to} N(0,V(\psi;\al)).
\end{equation}

Finally, relation (\ref{4-2.23}) together with (\ref{4-2.22}) and relation 
$$  
    P_\tht^{\ve,h} -\lim_{\ve\to 0} \ve b_{\ve,t}(\psi,h;\al) =
        \int_0^t \psi(s,Y^0(\al);\al) h(s,Y^0(\al);\al)\,ds =
        b(\psi,h;\al),
$$
provides the desirable results.
\end{proof}

\subsection{Optimization criteria. Construction of optimal $B$-robust estimates}
In this subsection we state and solve an optimization problem, which results in construction of optimal $B$-robust estimate.

Initially, it should be stressed that the bias vector $\wt b(\psi,h;\al):=[\gm_0^\psi(\al)]^{-1} \times \linebreak b(\psi,h;\al)$ can be viewed as the influence functional of the estimate $(\al_t^{\psi,\ve})_{\ve>0}$ w.r.t. alternative $(P_\al^{\psi,h})_{\ve>0}$.

Indeed, the expansion (\ref{4-2.15}) together with (\ref{4-2.22}) and (\ref{4-2.23}) allows to conclude that 
$$  
    \cL\left\{ \ve^{-1} (\al_t^{\psi,\ve} -\al-\ve^2[\gm_0^\psi(\al)]^{-1} 
        b_\ve(\psi,h;\al)) \mid P_\al^{\ve,h} \right\} \str{w}{\to} 
        N(0,V(\psi,\al)),
$$
and, hence, the expression 
$$  
    \al+\ve^2[\gm_0^\psi(\al)]^{-1} b_\ve(\psi,h;\al)-\al=
        \ve^2 [\gm_0^\psi(\al)]^{-1} b_\ve(\psi,h;\al),
$$
plays the role of bias on the ``fixed step $\ve$'' and it seems natural to interpret the limit
$$  
    P_\al^{\ve,h} -\lim_{\ve\to 0} \frac{\al+\ve^2[\gm_0^\psi(\al)]^{-1} 
        b_\ve(\psi,h;\al)-\al}{\ve} =[\gm_0^\psi(\al)]^{-1} b(\psi,h;\al),
$$
as the influence functional.

For each estimate $(\al_t^{\psi,\ve})_{\ve>0}$, $\psi\in \Psi_0$, define the risk functional w.r.t. alternative $(P_\al^{\ve,h})_{\ve>0}$, $h\in \cH$, as follows:
$$  
    D(\psi,h;\al) =\lim_{a\to \infty} \lim_{\ve\to 0} 
        E_\al^{\ve,h}\left( (\ve^{-2} |\al_t^{\psi\ve}-\al|^2) \wedge a\right),
$$
where $x\wedge \al=\min(x,a)$, $a>0$, $E_\al^{\ve,h}$ is an expectation w.r.t. measure $P_\al^{\ve,h}$. 

Using Proposition \ref{p4-2.2} it is not hard to verify that 
$$  
    D(\psi,h;\al)=|\wt b(\psi,h;\al)|^2+\tr V(\psi,\al),
$$
where $\tr A$ denotes the trace of matrix $A$.

By Proposition \ref{p4-2.2}
$$
    \ve^{-1} (\al_t^{\psi,\ve}-\al)\stackrel{d}{\to}
        N\big(\wt b(\psi,h;\al),V(\psi;\al)\big),
$$
where $\stackrel{d}{\to}$ denotes the convergence by distribution (by distribution $P_\al^{\ve,h}$ in our case),  $N(\wt b,V)$ is a Gaussian random vector with mean $\wt b$ and covariation matrix~$V$.

But if $\xi=(\xi_1,\dots,\xi_m)'$ is a Gaussian vector with parameters $(\mu,\sg^2)$, then
$$
    E|\xi|^2=\sum_{i=1}^m E\xi_i^2=\sum_{i=1}^m (E\xi_i)^2+
        \sum_{i=1}^m D\xi_i=|\mu|^2+\tr \sg^2,
$$
as it was required.

Connect with each $\psi\in \Psi_0$ the function $\wt\psi$ as follows
$$  
    \wt\psi(s,x;\al)=[\gm_0^\psi(\al)]^{-1} \psi(s,x;\al), \quad 
        0\leq s\leq t, \quad x\in C_t, \quad \al\in \cA.
$$
Then $\wt\psi\in \Psi_0$ and 
$$  
    \gm_0^{\wt\psi}(\al)=Id,
$$
where $Id$ is an unit matrix,
$$  
    V(\psi;\al)=V(\wt\psi;\al)=\Gm_0^{\wt\psi}(\al), \quad 
    \wt b(\psi,h;\al)=\wt b(\wt\psi,h;\al)=b(\wt\psi,h;\al).
$$
Therefore 
\begin{equation}\label{4-2.24}
    D(\psi,h;\al) =D(\wt\psi,h;\al) =|b(\wt\psi,h;\al)|^2 +
        \tr \Gm_0^{\wt\psi}(\al).
\end{equation}

Denote $\cH_r$, a set of functions $h\in \cH$ such that for each $\al\in \cA$
$$  
    \int_0^t |h(s,Y^0(\al);\al)|\,ds\leq r,
$$
where $r$, $r>0$, is a constant.

Since, for each $r>0$,
$$  
    \sup_{h\in \cH_r} |b(\wt\psi,h;\al)|\leq const(r) 
        \sup_{0\leq s\leq t} |\wt\psi(s,Y^0(\al);\al)|,
$$
where constant depends on $r$, we call the function $\wt \psi$ an influence function of estimate $(\al_t^{\psi,\ve})_{\ve>0}$ and a quantity 
$$  
    \gm_\psi^*(\al) =\sup_{0\leq s\leq t} |\wt\psi(s,Y^0(\al);\al)|
$$
is named as the (unstandardized) gross error sensitivity at point $\al$ of estimate $(\al_t^{\psi,\ve})_{\ve>0}$.

Define 
\begin{gather}
    \Psi_{0,c} =\bigg\{\psi\in \Psi_0\;:\;
        \int_0^t \psi(s,Y^0(\al);\al) [\dot{a} (s,Y^0(\al);\al)]'ds=
        Id, \label{4-2.25} \\
    \gm_\psi^*(\al)\leq c\bigg\},    \label{4-2.26}
\end{gather}
where $c\in [0,\infty)$ is a generic constant.

Take into account the expression (\ref{4-2.24}) for the risk functional we come to the following optimization problem, known in robust estimation theory as Hampel's optimization problem: minimize the trace of the asymptotic covariance matrix of estimate $(\al_t^{\psi,\ve})_{\ve>0}$ over the class $\Psi_{0,c}$, that is
\begin{equation}\label{4-2.27}
    \text{minimize} \;\;\; \int_0^t \psi(s,Y^0(\al);\al) 
        [\psi(s,Y^0(\al);\al)]'ds
\end{equation}
under the side conditions (\ref{4-2.25}) and (\ref{4-2.26}).

Define the Huber function $h_c(z)$, $z\in R^m$, $c>0$, as follows
$$  
    h_c(z):= z\min\left(1,\frac{c}{|z|}\right).
$$

For arbitrary nondegenerate matrix $A$ denote $\psi_c^A=h_c(A\dot{a})$.

\begin{theorem}\label{t4-2.1}
Assume that for given constant $c$ there exists a nondegenerate $m\times m$-matrix $A_c^*(\al)$, which solves the equation $($w.r.t. matrix $A)$
\begin{equation}\label{4-2.28}
    \int_0^t \psi_c^A(s,Y^0(\al);\al) [\dot{a}(s,Y^0(\al);\al)]'ds=Id.
\end{equation}

Then the function $\psi_c^{A_c^*(\al)}=h_c(A_c^*(\al)\dot{a})$ solves the optimization problem $(\ref{4-2.27})$.
\end{theorem}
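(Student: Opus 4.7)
The plan is to verify that the candidate $\psi^* := h_c(A_c^*(\al)\dot a) = \psi_c^{A_c^*(\al)}$ is feasible for $\Psi_{0,c}$ and then to establish its optimality by a direct ``$|\psi-\psi^*|^2\ge 0$'' comparison with an arbitrary competitor $\psi \in \Psi_{0,c}$.

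\textbf{Reduction.} The identity (\ref{4-2.25}) forces $\gm_0^\psi(\al)=Id$ for every $\psi\in\Psi_{0,c}$, and hence $\wt\psi=\psi$; consequently the sensitivity bound (\ref{4-2.26}) becomes the pointwise condition $|\psi(s,Y^0(\al);\al)|\le c$ in $s\in[0,t]$. In view of (\ref{4-2.24}) and the observation that, at the normalization $\gm_0^\psi=Id$, minimizing the trace of the matrix in (\ref{4-2.27}) amounts to minimizing
\[
J(\psi) := \int_0^t |\psi(s,Y^0(\al);\al)|^2\,ds
\]
subject to (\ref{4-2.25}) and $|\psi|\le c$.

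\textbf{Feasibility of $\psi^*$.} By construction of the Huber clipper $h_c$, $|\psi^*|\le c$. The matrix equation (\ref{4-2.28}) is exactly (\ref{4-2.25}) written for $\psi^*$, so $\gm_0^{\psi^*}(\al)=Id$ and in particular has full rank $m$. Nondegeneracy of $\Gm_0^{\psi^*}(\al)$ follows from that of $A_c^*(\al)$ together with the standing rank assumption $\dot a\in\Psi_0$. Hence $\psi^*\in\Psi_{0,c}$.

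\textbf{Optimality.} Fix $\psi\in\Psi_{0,c}$ and set $\phi:=\psi-\psi^*$. The identity $|\psi|^2-|\psi^*|^2=2(\psi^*)'\phi+|\phi|^2$ reduces matters to showing
\[
\int_0^t (\psi^*(s,Y^0(\al);\al))'\phi(s)\,ds \ge 0.
\]
Applying (\ref{4-2.25}) to both $\psi$ and $\psi^*$ gives $\int_0^t \phi\,\dot a'\,ds=0$ as a zero matrix; left-multiplying by $A_c^*(\al)$ and taking the trace yields $\int_0^t (A_c^*(\al)\dot a)'\phi\,ds=0$. Split $[0,t]$ into $E=\{s:|A_c^*(\al)\dot a(s)|\le c\}$, where $\psi^*=A_c^*(\al)\dot a$, and $E^c$, where $\psi^*=(c/|A_c^*(\al)\dot a|)\,A_c^*(\al)\dot a$. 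Combining these two expressions with the vanishing integral gives
\[
\int_0^t (\psi^*)'\phi\,ds \;=\; \int_{E^c}\Bigl(\frac{c}{|A_c^*(\al)\dot a|}-1\Bigr)(A_c^*(\al)\dot a)'\phi\,ds.
\]
The prefactor is negative on $E^c$, so it remains to verify $(A_c^*(\al)\dot a)'\phi\le 0$ on $E^c$. Here the pointwise constraint $|\psi|\le c$ combined with Cauchy--Schwarz gives $(A_c^*(\al)\dot a)'\psi\le |A_c^*(\al)\dot a|\cdot|\psi|\le c|A_c^*(\al)\dot a|$, whereas $(A_c^*(\al)\dot a)'\psi^*=c|A_c^*(\al)\dot a|$ on $E^c$; subtracting produces the required inequality. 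The cross term is therefore $\ge 0$ and $J(\psi)\ge J(\psi^*)$.

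\textbf{Main obstacle.} The crux is the pointwise sign analysis on $E^c$: one must convert the $L^\infty$-type constraint $|\psi|\le c$ into the inner-product inequality $(A_c^*(\al)\dot a)'\phi\le 0$ via Cauchy--Schwarz. This is the conceptual heart of Hampel's method transplanted to the present continuous-time small-noise setting; once this step is in place, the rest is Lagrange-style bookkeeping together with the orthogonality $\int\phi\,\dot a'\,ds=0$ supplied by the normalization (\ref{4-2.25}).
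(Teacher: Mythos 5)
Your proof is correct, and it handles both feasibility and optimality cleanly, but it takes a genuinely different route from the paper's. The paper follows Hampel et al.\ and \emph{completes the square}: for any matrix $A$ and any $\psi\in\Psi_{0,c}$, the side condition (\ref{4-2.25}) gives $\int(\psi-A\dot{a})(\psi-A\dot{a})'\,ds=\int\psi\psi'\,ds-A-A'+AI^0(\al)A'$, so on the constraint set $\tr\int\psi\psi'\,ds$ differs from $\int|\psi-A\dot{a}|^2\,ds$ by a constant independent of $\psi$; it then observes that $h_c(A\dot{a})$ is the \emph{pointwise} minimizer of $|\psi-A\dot{a}|^2$ over $\{|\psi|\le c\}$ (the metric projection onto the ball of radius $c$), and finally chooses $A=A_c^*(\al)$ via (\ref{4-2.28}) so that the resulting candidate re-enters $\Psi_{0,c}$. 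You instead prove the associated \emph{variational inequality} directly: you expand $J(\psi)-J(\psi^*)$ around $\psi^*$, use the orthogonality $\int\phi\,\dot{a}'\,ds=0$ (supplied by the shared normalization (\ref{4-2.25})) to reduce the cross term to an integral over the clipping set $E^c$, and control its sign there by Cauchy--Schwarz. The two arguments rest on the same two pillars --- the linear constraint neutralizes the $A\dot{a}$ direction, and the bound $|\psi|\le c$ together with the structure of $h_c$ handles the rest --- so they are close cousins, essentially the ``recognize the projection'' versus ``verify the projection's variational inequality'' forms of one argument. Your version has the merit of bypassing the claim $\arg\min_{|y|\le c}|z-y|^2=h_c(z)$, which the paper justifies only loosely (by positing that the minimizer is proportional to $z$), and of making the feasibility $\psi^*\in\Psi_{0,c}$ explicit, including the rank conditions the paper passes over; the paper's version yields the marginally stronger fact that $\psi_c^{A_c^*}$ minimizes $\int|\psi-A_c^*\dot{a}|^2\,ds$ over the larger class on which only the bound $|\psi|\le c$ is imposed.
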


\begin{proof}
We follow Hampel et al. \cite{12}.

Let $A$ be an arbitrary $m\times m$-matrix.

Since for each $\psi\in \Psi_{0,c}$, $\int \psi(\dot{a})'=Id$, $\int \dot{a}[\dot{a}]' =I^0(\al)$ (see (\ref{4-2.7})), then 
$$  
    \int (\psi-A\dot{a} )(\psi-A\dot{a})' =
        \int \psi\psi'-A-A'+AI^0(\al)A'
$$
(here and below we use simple evident notation for integrals). 

Therefore since the trace is an additive functional instead of minimizing of $\tr \int \psi \psi'$ we can minimize 
$$  
    \tr \int(\psi-A\dot{a})(\psi-A\dot{a})'=\int |\psi-A\dot{a}|^2.
$$

Note that for each $z$
$$
    \arg \min_{|y|\leq c} |z-y|^2=h_c(z).
$$
Indeed, it is evident that minimizing $y$ has the form $y=\bt z$, where $\bt$, $0\leq
\bt\leq 1$, is constant. Then
$$
    \min_{|y|\leq c} |z-y|^2=\min_{\bt\leq \frac{c}{|z|}}
        (1-\bt)^2|z|^2.
$$

Thus we have to find
$$
    \arg \min_{\bt\leq \frac{c}{|z|}} (1-\bt)^2=
        \min\Big(1,\frac{c}{|z|}\Big).
$$

But last relation is trivially satisfied. Hence the minimizing  $y^*\!=\!z\min(1,\frac{c}{|z|})$ and 
$$
    \arg \min_{|\psi|\leq c} |\psi-A\dot{a}|^2=h_c(A\dot{a}).
$$

From the other side,
$$
    |h_c(z)|^2=|z|^2I_{\{|z|\leq c\}}+
        \frac{|z|^2}{|z|^2}\,c^2\,I_{(|z|\geq c)}\leq c^2.
$$
Hence
$$
    |h_c(z)|\leq c \quad \text{for all} \;\;\;z
$$
and therefore $h_c(A\dot{a})$ satisfies the condition  (\ref{4-2.26}) for each~$A$.

Now it is evident that a function $h_c(A\dot{a})$ minimizes the expression under integral sign, and hence the integral itself over all functions $\psi\in \Psi_0$ satisfying (\ref{4-2.26}).

At the same time the condition (\ref{4-2.25}), generally speaking, can be violated. But, since a matrix $A$ is arbitrary, we can choose $A=A_c^*(\al)$ from (\ref{4-2.28}) which, of course, guarantees the validity of (\ref{4-2.25}) for $\psi_c^*=\psi_c^{A_c^*(\al)}$.
\end{proof}

As we have seen the resulting optimal influence functions  $\psi_c^*$ is defined along the process $Y^0(\al)=(Y_s^0(\al))_{0\leq s\leq t}$, which is a solution of equation (\ref{4-2.2}).

But for constructing optimal estimate we need a function $\psi_c^*(s,x;\al)$ defined on whole space $[0,t]\times C_t\times \cA$.

For this purpose define $\psi_c^*(s,x;\al)$ as follows;
\begin{equation}\label{4-2.29}
    \psi_c^*(s,x;\al) =\psi_c^{A_c^*(\al)} (s,x;\al) =
        h_\ve(A_c^*(\al) \dot{a}(s,x;\al)),
\end{equation}
and as usual $\psi_{c,\ve}^*(s,x;\al)=\frac{1}{\ve} \psi_c^*(s,\ve x;\al)$, $0\leq s\leq t$, $x\in C_t$, $\al\in \cA$.

\begin{definition}\label{d4-2.2}
We say that $\psi_c^*(s,x;\al)$, $0\leq s\leq t$, $x\in C_t$, $\al\in \cA$, is an influence function of optimal $B$-robust estimate $(\al_t^{*,\ve})_{\ve>0}=(\al_t^{\psi_c^{*},\ve})_{\ve>0}$ over the class of CULAN estimates $(\al_t^{\psi,\ve})_{\ve>0}$, $\psi\in \Psi_{0,c}$, if the matrix $A^*(\al)$ is differentiable in $\al$.
\end{definition}

From (\ref{4-2.9}), (\ref{4-2.11}), (\ref{4-2.28}) and (\ref{4-2.29}) it directly follows that 
$$  
    \gm_0^{\psi_c^*}(\al) =P_\al^\ve -\lim_{\ve\\to 0} \ve^2 \gm_\ve^{\psi_c^*}(\al) =
        \int_0^t \psi_c^*(s,Y^0(\al);\al) (\dot{a}(s,Y^0(\al);\al))'ds=Id.
$$
Besides, for each alternative $(P_\al^{\ve,h})_{\ve>0}$, $h\in \cH$, according to the Proposition \ref{p4-2.2} we have 
$$  
    \cL \left\{\ve^{-1} (\al_t^{*,\ve}-\al) \mid P_\al^{\ve,h}\right\} \str{w}{\to} 
        N(b(\psi_c^*,h;\al), V(\psi_c^*;\al)) \;\;\;\text{as} \;\;\; \ve\to 0,
$$
where 
$$  
    b(\psi_c^*,h;\al)=\int_0^t \psi_c^*(s,Y^0(\al);\al) h(s,Y^0(\al);\al)\,ds,
$$
and $V(\psi_c^*;\al)=\Gm_0^{\psi_c^*}(\al)$.

Hence, the risk functional for estimate $(\al_t^{*,\ve})_{\ve>0}$ is 
$$  
    D(\psi_c^*,h;\al)=|b(\psi_c^*,h;\al)|^2 +\tr \Gm_0^{\psi_c^*}, 
        \quad h\in \cH,
$$
and the (unstandardized) gross error sensitivity of $(\al_t^{*,\ve})_{\ve>0}$ is 
$$  
    \gm_{\psi_c^*}(\al) =\sup_{0\leq s\leq t} |\psi_c^*(s,Y^0(\al);\al)|\leq c.
$$

From above reasons, we may conclude that $(\al_t^{*,\ve})_{\ve>0}$ is the optimal $B$-robust estimate over the class of estimates $(\al_t^{\psi,\ve})_{\ve>0}$, $\psi\in \Psi_{0,c}$ in the following sense: the trace of asymptotic covariance matrix of $(\al_t^{*,\ve})_{\ve>0}$ is minimal among all estimates $(\al_t^{\psi,\ve})_{\ve>0}$ with bounded by constant gross error sensitivity, that is
$$  
    \Gm_0^{\psi_c^*}(\al) \leq \Gm_0^\psi(\al) \;\;\;\text{for all} \;\;\; 
        \psi\in \Psi_{0,c}\,.   \;\;\qed
$$

Note that for each estimate $(\al_t^{\psi,\ve})_{\ve>0}$ and alternatives $(P_\al^{\ve,h})_{\ve>0}$, $h\in \cH$, the influence functional is bounded by $const(r) \cdot c$. Indeed, we have for $\psi\in \Psi_{0,c}$,
$$  
    \sup_{h\in \cH_r} |b(\psi,h;\al)|\leq const(r) \cdot c:= C(r,c),
$$
and since from (\ref{4-2.24})
$$  
    \inf_{\psi\in \Psi_{0,c}} \sup_{h\in \cH_r} D(\psi,h;\al) \leq 
        C^2(r,c)+\tr \Gm_0^{\psi_c^*}(\al),
$$
we can choose ``optimal level'' of truncation, minimizing the expression 
$$  
    C^2(r,c)+\tr \Gm_0^{\psi_c^*}(\al)
$$
over all constants $c$, for which the equation (\ref{4-2.28}) has a solution $A_c^*(\al)$. This can be done using the numerical methods.

For the problem of existence and uniqueness of solution of equation (\ref{4-2.28}) we address to Rieder \cite{35}. 

In the case of one-dimensional parameter $\al$ (i.e. $m=1$) the optimal level $c^*$ of truncation is given as an unique solution of the following equation (see Lazrieva and Toronjadze \cite{21}, \cite{22})
$$  
    r^2c^2 =\int_0^t [\dot{a} (s,Y^0(\al);\al)]_{-c}^c \dot{a}(s,Y^0(\al);\al)\,ds-
        \int_0^t ([\dot{a}(s,Y^0(\al);\al)]_{-c}^c)^2 \,ds,
$$
where $[x]_a^b=(x\wedge b)\vee a$ and the resulting function 
$$  
    \psi^*(s,x;\al) =[\dot{a}(s,x;\al)]_{-c^*}^{c^*} , \quad 
        0\leq s\leq t, \quad x\in C_t,
$$
is $(\Psi_0,\cH_r)$ optimal in the following minimax sense:
$$  
    \sup_{h\in \cH_r} D(\psi^*,h;\al) =\inf_{\psi\in \Psi} 
        \sup_{h\in \cH_r} D(\psi,h;\al).
$$

\section{Optimal Mean-Variance Robust Hedging}

\subsection{A financial market model}
Let $(\Om,\cF,F=(\cF_t)_{0\leq t\leq T},P)$ be a filtered probability space with filtration $F$ satisfying the usual conditions, where $T\in (0,\infty]$ is a fixed time horizon. Assume that $\cF_0$ is a trivial and $\cF_T=\cF$.

There exist $d+1$, $d\geq 1$ primitive assets: one bound, whose price process is assumed to be 1 at all times and $d$ risky assets (stocks), whose $R^d$-valued price process $X=(X_t)_{0\leq t\leq T}$ is a continuous semimartingale given by the relation: 
\begin{equation}\label{4-3.1}
    dX_t =\diag (X_t) \,dR_t, \quad X_0>0,
\end{equation}
where $\diag(X)$ denotes the diagonal $d\times d$-matrix with diagonal elements $X^1,\dots, X^d$, and the yield process $R=(R_t)_{0\leq t\leq T}$ is a $R^d$-valued continuous semimartingale satisfying the stricture condition (SC). That is (see Schweizer \cite{38}) 
\begin{equation}\label{4-3.2}
    d R_t=d\la \wt M\ra_t\lb_t+d\wt M_t, \quad R_0=0,
\end{equation}
where $\wt M=(\wt M_t)_{0\leq t\leq T}$ is a $R^d$-valued continuous martingale, $\wt M\in \cM_{0,\loc}^2(P)$, $\lb=(\lb_t)_{0\leq t\leq T}$ is a $F$-predictable $R^d$-valued process, and the mean-variance tradeoff (MVT) process $\wt \cK=(\wt\cK_t)_{0\leq t\leq T}$ of process $R$
\begin{equation}\label{4-3.3}
    \wt\cK_t:=\int_0^t \lb_s' d\la \wt M\ra_s\lb_s =\la \lb'\cdot \wt M\ra_t<\infty
        \;\;\Pas, \;\; t\in [0,T].
\end{equation}

\begin{remark}\label{r4-3.1}
Remember that all vectors are assumed to be column vectors.
\end{remark}

Suppose that the martingale $\wt M$ has the form 
\begin{equation}\label{4-3.4}
    \wt M=\sg\cdot M,
\end{equation}
where $M=(M_t)_{0\leq t\leq T}$ is a $R^d$-valued continuous martingale, $M\in \cM_{0,\loc}^2(P)$ with $d\la M^i,M^j\ra_t=I_{i_j}^{d\times d} d C_t$, $I^{d\times d}$ is the identity matrix, $C=(C_t)_{0\leq t\leq T}$ is a continuous increasing bounded process with $C_0=0$.

Further, let $\sg=(\sg_t)_{0\leq t\leq T}$ is a $d\times d$-matrix valued, $F$-predictable process with $\rank(\sg_t)=d$ for any $t$, $P$-a.s., the process $(\sg_t^{-1})_{0\leq t\leq T}$ is locally bounded, and 
\begin{equation}\label{4-3.5}
    \int_0^T \sg_t\,d\la M\ra_t \sg_t'<\infty \;\; \Pas
\end{equation}

Assume now that the following condition be satisfied:

There exist fixed $R^d$-valued, $F$-predictable process $k=(k_t)_{0\leq t\leq T}$ such that 
\begin{equation}\label{4-3.6}
    \lb=\lb(\sg)=(\sg')^{-1}k.
\end{equation}

In the case from (\ref{4-3.2}) we get 
\begin{gather}
    dR_t=d\la \wt M\ra_t \lb_t +d\wt M_t =\sg_td\la M\ra_t \sg_t'
        (\sg_t')^{-1} k_t+\sg_t d M_t \notag \\
    =\sg_t(d\la M\ra_tk_t+dM_t), \label{4-3.7}
\end{gather}
and 
\begin{gather*}
    \wt\cK_t =\int_0^t \lb_s' d\la \wt M\ra_s\lb_s =
        \int_0^t k_t'((\sg_t')^{-1})' \sg_t d\la \wt M\ra_t 
        \sg_t'(\sg_t')^{-1} k_t \\
    =\int_0^t k_t' d\la M\ra_t k_t=\la k\cdot M\ra_t:=\cK_t.
\end{gather*}

From (\ref{4-3.3}) we have 
\begin{equation}\label{4-3.8}
    \cK_t<\infty \;\; \Pas \;\;\text{for all} \;\; t\in [0,T].
\end{equation}

Thus, of we introduce the process $M^0=(M_t^0)_{0\leq t\leq T}$ by the relarion 
\begin{equation}\label{4-3.9}
    d M_t^0=d\la M\ra_t k_t +d M_t, \quad M_0^0=0,
\end{equation}
then the MVT process $\cK=(\cK_t)_{0\leq t\leq T}$ of $R^d$-valued semimartingale $M^0$ is finite, and hence $M^0$ satisfies SC.

Finally, the scheme (\ref{4-3.1}), (\ref{4-3.2}), (\ref{4-3.4}), (\ref{4-3.6}) and (\ref{4-3.9}) can be rewritten in the following form 
\begin{equation}\label{4-3.10}
\begin{aligned}
    dX_t & =\diag (X_t) \,dR_t, \quad X_0>0, \\
    dR_t & =\sg_t\,d M_t^0, \quad R_0=0, \\
    d M_t^0 & =d\la M\ra_t k_t+d M_t, \quad M_0=0,
\end{aligned}
\end{equation}
where $\sg$ and $k$ satisfy (\ref{4-3.5}) and (\ref{4-3.8}), respectively.

This is our financial market model.

\subsection{Characterization of variance-optimal ELMM (equivalent local martingale measure)}

A key role in mean-variance hedging plays variance-optimal ELMM (see, e.g., RSch \cite{34}, GLP \cite{11}). Here we collect some facts characterizing this measure.

 We start with remark that the sets ELMMs for processes $X$, $R$ and $M^0$ form (\ref{4-3.10}) coincide. Hence we can and will consider the simplest process $M^0$. 

Introduce the notation 
$$  
    \cM_2^e:= \left\{ Q\sim P\,:\, \frac{dQ}{dP}\in L^2(P), \;\; M^0 \;\;
        \text{is a $Q$-local martingale}\right\},
$$
and suppose that 
$$  
    (c.1) \hskip+3cm \cM_2^e \neq \vnth. \hskip+5cm 
$$
The solution $\wt P$ of the optimization problem
$$  
    E\cE_T^2(\cM^Q)\to \inf_{Q\in \cM_2^e}
$$
is called variance-optimal ELMM.

Here 
$$  
    \frac{dQ}{dP}\Big|_{\cF_T} =\cE_T(M^Q),
$$
and $(\cE_t(M^Q))_{0\leq t\leq T}$ is the Dolean exponential of martingale $M^Q$. 

It is well-known (see, e.g., Schweizer \cite{38}, \cite{39}) that under condition (c.1) variance-optimal ELMM $\wt P$ exist.

Denote 
$$  
    \wt z_T:=\frac{d\wt P}{dP} \Big|_{\cF_T} \,,
$$
and introduce RCLL process $\wt z=(\wt z_t)_{0\leq t\leq T}$ by the relation 
$$  
    \wt z_t=E^{\wt P} (\wt z_T/\cF_T), \quad 0\leq t\leq T.
$$

Then, by Schweizer \cite{38}, \cite{39}
\begin{equation}\label{4-3.11}
    \wt z_T=\wt z_0+\int_0^T \zt_t'\,d M_t^0,
\end{equation}
where $\zt=(\zt_t)_{0\leq t\leq T}$ is the $R^d$-valued $F$-predictable process with 
$$  
    \int_0^T \zt_t'\,d\la M\ra_t \zt_t<\infty,
$$
and the process $\big(\int\limits_0^t \zt_s' dM_s^0\big)_{0\leq t\leq T}$ is a $\wt P$-martingale.

Relation (\ref{4-3.11}) easily implies that the process $\wt z$ is actually continuous.

Suppose, in addition to (c.1), that the following condition is satisfied:

\smallskip

(c.$^*$) all $P$-local martingales are continuous.
\smallskip

This technical assumption is satisfied in stochastic volatility models, where $F=F^w$ is the natural filtration generated by the Wiener process.

It shown in Mania and Tevzadze \cite{35}, Mania et al. \cite{26} that under conditions (c.1) and (c$^*$) density $\wt z_T$ of variance optimal ELMM is uniquely characterized by the relation
\begin{equation}\label{4-3.12}
    \wt z_T =\frac{\cE_T((\vf-k)'\cdot M^0)}{E\cE_T((\vf-k)'\cdot M^0)}\,,
\end{equation}
where $\vf$ together with the pair $(L,c)$ is the unique solution of the following equation 
\begin{equation}\label{4-3.13}
    \frac{\cE_T((\vf-2k)'\cdot M)}{\cE_T(L)} =c\cE_T^2(-k'\cdot M),
\end{equation}
where $L\in M_{0,\loc}^2(P)$, $\la L,M\ra=0$, $c$ is a constant.

Moreover, the process $\zt=(\zt_t)_{0\leq t\leq T}$ from (\ref{4-3.11}) has the form 
\begin{equation}\label{4-3.14}
\zt_t=(\vf_t-k_t)\cE_t((\vf-k)'\cdot M^0).
\end{equation}

Here $\vf=(\vf_t)_{0\leq t\leq T}$ is a $R^d$-valued, $F$-predictable process with 
$$  
    \int_0^T \vf_t'\,d\la M\ra_t\vf_t<\infty.
$$

Let $\tau$ be $F$-stopping time.

Denote $\la k'\cdot M\ra_{T\tau}=\la k'\cdot M\ra_T-\la k'\cdot M\ra_\tau$.

\begin{proposition}[{see also Biagini et al. \cite{3}, LLaurent and Pham \cite{19}}] \label{p4-3.1}
\

$1.$ Equation $(\ref{4-3.13})$ is equivalent to equation 
\begin{equation}\label{4-3.15}
    \frac{\cE_T(\vf'\cdot M^*)}{\cE_T(L)} =ce^{\la k'\cdot M\ra_T},
\end{equation}
where the $R^d$-valued process $M^*=(M_t^*)_{0\leq t\leq T}$ is given by the relation 
$$  
    d M_t^*=2d\la M\ra_t k_t +dM_t, \quad M_0^*=0.
$$

$2.$ {\rm a)} If there exists the martingale $m=(m_t)_{0\leq t\leq T}$, $m\in \cM_{0,\loc}^2(P)$ such that 
\begin{equation}\label{4-3.16}
    e^{-\la k'\cdot M\ra_T} =c+m_T, \quad \la m,M\ra=0,
\end{equation}
then $\vf\equiv 0$ and $L_T=\int\limits_0^T \frac{1}{c+m}\,dm_s$ solve the equation $(\ref{4-3.15})$.

In this case 
\begin{equation}\label{4-3.17}
    \wt z_T=\frac{\cE_T(-k'\cdot M^0)}{E\cE_T(-k'\cdot M^0)}\,,
\end{equation}
process $\zt=(\zt_t)_{0\leq t\leq T}$ from $(\ref{4-3.11})$ is equal to 
$$  
    \zt_t=-k_t\cE_t(-k'\cdot M^0),
$$
and
$$  
    E\left[ \left( \frac{\wt z_T}{\wt z_\tau}\right)^2 \Big/ \cF_\tau\right] =
        \frac{1}{E(e^{-\la k'\cdot M\ra_{T\tau}} /\cF_\tau)}\,.
$$

{\rm b)} If there exist $R^d$-valued $F$-predictable process $\ell=(\ell_t)_{0\leq t\leq T}$, $\int\limits_0^T \ell_t'd \la M\ra \ell_t <\infty$ and 
$$  
    e^{\la k'\cdot M\ra_T}=c+\int_0^T \ell_t'\,d M_t^*,
$$
then $L\equiv 0$ and $\vf_t=\frac{\ell_t}{c+\int_0^t \ell_s'dM_s^*}$ solve the equation $(\ref{4-3.15})$.

In this case 
$$
    \wt z_T=\cE_T(-k'\cdot M) \;\; (:=\wh z_T, \;\;
        \text{the density of minimal martingale measure} \;\; \wh P),
$$
and
$$  
    E\left(\left( \frac{\wt z_T}{\wt z_\tau}\right)^2 \Big/ \cF_\tau\right) =
        E^{P^*} (e^{\la k'\cdot M\ra_{T\tau}} \big/ \cF_\tau),
$$
where $dP^*=\cE_T(-2k'\cdot M) dP$.
\end{proposition}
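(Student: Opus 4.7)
Part 1 is a direct Dolean-exponential identity. Writing $\cE_t(N) = \exp(N_t - \tfrac12\la N\ra_t)$ and noting $\la M^*\ra = \la M\ra$ (since $M^* - M$ is of bounded variation), one verifies by a cross-term computation that
\[
\frac{\cE_T((\vf-2k)'\cdot M)}{\cE_T^2(-k'\cdot M)} = \cE_T(\vf'\cdot M^*)\,e^{-\la k'\cdot M\ra_T},
\]
the summand $2\!\int\!\vf'd\la M\ra k$ hidden inside $\cE(\vf'\cdot M^*)$ being precisely what converts the $M$-integrator into the $M^*$-integrator. Dividing (\ref{4-3.13}) by $\cE_T^2(-k'\cdot M)$ and substituting this identity transforms it into (\ref{4-3.15}).

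For parts 2(a) and 2(b), I would verify the proposed solutions by direct substitution. In (a), with $Y_t := c+m_t$ and $L_T := \int_0^T dm_s/(c+m_s)$, the SDE $dY = Y\,dL$ identifies $\cE(L) = Y/c$; evaluating at $T$ and using the hypothesis $c+m_T = e^{-\la k'\cdot M\ra_T}$ reduces (\ref{4-3.15}) with $\vf\equiv 0$ to a tautology, while $\la L,M\ra = 0$ follows from $\la m, M\ra = 0$. In (b), with $Y_t := c+\int_0^t \ell'dM^*$, the SDE $dY = Y\vf'dM^*$ identifies $Y = c\cE(\vf'\cdot M^*)$, so with $L\equiv 0$ equation (\ref{4-3.15}) again reduces to the hypothesis. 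Substituting the resulting $\vf$ into the representation (\ref{4-3.12}) for $\wt z_T$ and into (\ref{4-3.14}) for $\zt$ produces the stated closed forms; in case (b) this uses the intermediate identity $\cE_T((\vf-k)'\cdot M^0) = \cE_T(-k'\cdot M)\cE_T(\vf'\cdot M^*)\,e^{-\la k'\cdot M\ra_T}$, after which the $\vf$-factor contributes $e^{\la k'\cdot M\ra_T}/c$, which cancels against $c\,e^{-\la k'\cdot M\ra_T}$ to leave $\wt z_T = \cE_T(-k'\cdot M)$.

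The delicate piece is the conditional-second-moment formula. In case (b), $\wt z = \cE(-k'\cdot M)$ is itself a true $P$-martingale, so
\[
(\wt z_T/\wt z_\tau)^2 = \frac{\cE_T(-2k'\cdot M)}{\cE_\tau(-2k'\cdot M)}\,e^{\la k'\cdot M\ra_{T\tau}},
\]
and Bayes' rule with $dP^*/dP = \cE_T(-2k'\cdot M)$ immediately gives the claim. Case (a) is harder because $D_t := \cE_t(-k'\cdot M^0)$ is \emph{not} a $P$-martingale (it carries a drift $-D_t k'\,d\la M\ra k$); hence $\wt z_\tau = E[D_T\mid\cF_\tau]/E[D_T]$ rather than $D_\tau/E[D_T]$, and one must compute
\[
E[(\wt z_T/\wt z_\tau)^2 \mid \cF_\tau] = \frac{E[D_T^2\mid\cF_\tau]}{(E[D_T\mid\cF_\tau])^2}
\]
separately. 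Using $D_T = \cE_T(-k'\cdot M)(c+m_T)$ and the reciprocity $e^{\la k'\cdot M\ra_T}(c+m_T) = 1$ to reduce $D_T^2 = \cE_T(-2k'\cdot M)(c+m_T)$, together with the observation that $\la m,M\ra = 0$ renders both $\cE(-k'\cdot M)\,m$ and $\cE(-2k'\cdot M)\,m$ local martingales, one obtains
\[
E[D_T\mid\cF_\tau] = \cE_\tau(-k'\cdot M)(c+m_\tau), \quad E[D_T^2\mid\cF_\tau] = \cE_\tau(-2k'\cdot M)(c+m_\tau).
\]
Their ratio, combined with $\cE_\tau^2(-k'\cdot M) = \cE_\tau(-2k'\cdot M)\,e^{\la k'\cdot M\ra_\tau}$, collapses to $1/[e^{\la k'\cdot M\ra_\tau}(c+m_\tau)]$, which equals $1/E[e^{-\la k'\cdot M\ra_{T\tau}}\mid\cF_\tau]$ by conditioning the hypothesis directly.

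The main obstacle is precisely this failure of the $P$-martingale property of $\cE(-k'\cdot M^0)$ in case (a): it blocks the one-step martingale shortcut and forces the two-moment computation above, with the $(c+m_T)\leftrightarrow e^{-\la k'\cdot M\ra_T}$ reciprocity and repeated appeals to the orthogonality $\la m, M\ra = 0$ doing the essential cancellations. Everything else is standard Dolean-exponential algebra.
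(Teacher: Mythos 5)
Your proposal is correct and follows essentially the same route as the paper: part 1 via the Yor/Dol\'eans-exponential identity $\cE_T((\vf-2k)'\cdot M)=\cE_T(\vf'\cdot M^*)\cE_T(-2k'\cdot M)$ together with $\cE_T^2(-k'\cdot M)=\cE_T(-2k'\cdot M)e^{\la k'\cdot M\ra_T}$, part 2 by direct verification (It\^o's formula for $\ln(c+m)$ giving $\cE_T(L)=(c+m_T)/c$), and the conditional second moment by the Bayes rule under the measures with densities $\cE_T(-k'\cdot M)$ and $\cE_T(-2k'\cdot M)$, exploiting $\la m,M\ra=0$ and the reciprocity $e^{\la k'\cdot M\ra_T}(c+m_T)=1$. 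Your explicit remark that $\cE(-k'\cdot M^0)$ fails to be a $P$-martingale in case (a), forcing the two-conditional-moment computation, is exactly what the paper's ratio of conditional expectations implements, just stated more carefully.
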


\begin{proof}
1. By the Yor formula 
\begin{gather*}
    \cE_T\left(\vf-2k)'\cdot M\right)=
        \cE_T(\psi'\cdot M-2k'\cdot M) \\
    =\cE_T\Bigg( \vf'\cdot \bigg( M+2\int_0^\cdot d\la M\ra_t k_t \bigg) -
        2\int_0^\cdot \psi_t' d\la M\ra_t k_t-2k'\cdot M\Bigg) \\
    =\cE_T(\vf'\cdot M^*)\cE_T(-2k'\cdot M),
\end{gather*}
and 
$$  
    \cE_T^2(-k'\cdot M)=\cE_T(-2k'\cdot M) e^{\la k'\cdot M\ra_T}.
$$
Assertion follows.

2. a) Note at first that $\la L,M\ra=0$. Further, by the formula we can write 
$$  
    \ln(c+m_t)-\ln c=\int_0^t \frac{1}{c+m_s}\,d m_s -
        \frac{1}{2} \int_0^t \frac{1}{(c+m_s)^2}\, d\la m\ra_s.
$$

Hence 
$$  
    e^{\ln(c+m_T)-\ln c} =\cE_T(L),
$$
and thus 
$$  
    \cE_T(L) =\frac{c+m_T}{c} =\frac{e^{-\la k'\cdot M\ra_T}}{c}\,.
$$

Finally, by the Bayes rule and the Girasnov Theorem 
\begin{gather*}
    E\left( \left( \frac{\wt z_T}{\wt z_\tau}\right)^2 \Big/\cF_\tau\right)=
        \frac{E(\cE_T(-2k'\cdot M)e^{-\la k'\cdot M\ra_T}/\cF_\tau)}
            {E^2(\cE_T(-k'\cdot M)e^{-\la k'\cdot M\ra_T}/\cF_\tau)}\\
    =\frac{E^*(c+m_T/\cF_\tau)\cE_T^2(-k'\cdot M)}
        {(\wh E(c+m_\tau/\cF_\tau))^2\cE_T^2(-2k'\cdot M)} =
        \frac{c+m_\tau}{(c+m_\tau)^2}\cdot e^{\la k'\cdot M\ra_\tau} \\
    =\frac{1}{E(e^\la k'\cdot M\ra_{T\tau}/\cF_\tau)}\,.
\end{gather*}

The proof of case 2 b) is quite analogous. 

Proposition is proved.
\end{proof}

\subsection{Misspecified asset price model and robust hedging}               
Denote by $\Ball_L(0,r)$, $r\in [0,\infty)$ the closed $r$-radius ball in the space $L=L_\infty(dt\times dP)$, with the center at the origin, and let 
\begin{gather}
    \cH:= \big\{ h=\{h_{ij}\}, \; i,j=\wh{1,d}:h \;\text{is $F$-predictable} \;\;
        d\times d\text{-matrix} \notag \\
    \text{valued process,} \;\; \rank(h)=d, \;\; h_{ij}\in \Ball_L(0,r), \;\; 
        r\in [0,\infty)\big\}.
\label{4-3.18}
\end{gather}

Class $\cH$ is called the class of alternatives.

Fix the value of small parameter $\dl>0$, as well as $d\times d$-matrix valued, $F$-predictable process $\sg^0=(\sg_t^0)_{0\leq t\leq T}= (\{\sg_{ij,t}^0\}, \,1\leq i,j\leq d)_t$ 
such that $|\sg_{ij,t}^0|\leq const$, $\fa i,j,t$, the matrix $(\sg^0)^2=\sg^0(\sg^0)'$ is uniformly elliptic, i.e. for each vector $v_t=(v_t',\dots,v_t^d)$ with probability 1
\begin{equation}\label{1.4.4}
    \sum_{i,j=1}^d (\sg^0)_{ij,t}^2 v_t^i v_t^j \geq c\sum_{i=1}^d |v_t^i|^2,
        \quad c>0, \quad 0\leq t\leq T,
\end{equation}
and denote 
\begin{equation}\label{4-3.20}
    A_\dl=\{\sg: \sg=\sg^0+\dl h, \;\; h\in \cH\}.
\end{equation}

\begin{proposition}\label{1.4.1}
Every $\sg$ from the class $A_\dl$  for sufficiently small  $\dl$ is  $F$-predictable $d\times d$-valued process with bounded elements and the matrix $\sg^2=\sg\sg'$ is uniformly elliptic.
\end{proposition}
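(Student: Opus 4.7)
The plan is to verify each of the three claims separately. Predictability and boundedness are essentially automatic from the definitions; the substantive point is uniform ellipticity, which I would obtain by a perturbation argument around $\sg^0(\sg^0)'$. First, $\sg=\sg^0+\dl h$ is $F$-predictable because both $\sg^0$ and $h$ are so by hypothesis (the predictability of $h$ is built into the definition of $\cH$ in (\ref{4-3.18})). For the boundedness of entries, the assumption $|\sg_{ij,t}^0|\leq \const$ combined with $h_{ij}\in \Ball_L(0,r)$ (so that $|h_{ij,t}|\leq r$ for $dt\times dP$-a.e.\ $(t,\om)$) yields $|\sg_{ij,t}|\leq \const+\dl r$ a.e., which is a deterministic finite bound.

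The core step is uniform ellipticity. For an arbitrary $v=v_t(\om)\in R^d$, I would expand
$$
v'\sg\sg' v = |\sg' v|^2 = |(\sg^0)' v|^2 + 2\dl\la (\sg^0)' v, h' v\ra + \dl^2 |h' v|^2.
$$
By hypothesis (\ref{1.4.4}) the leading term satisfies $|(\sg^0)' v|^2 \geq c|v|^2$. The entry-wise bounds on $\sg^0$ and $h$ translate into operator-norm bounds $|(\sg^0)' v|\leq K_0|v|$ and $|h' v|\leq K_h|v|$ with deterministic constants $K_0,K_h$ depending only on the $L^\infty$-bound of $\sg^0$, on $r$, and on $d$. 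Cauchy--Schwarz then gives $|\la (\sg^0)' v, h' v\ra|\leq K_0 K_h |v|^2$, and hence
$$
v'\sg\sg' v \geq (c-2\dl K_0 K_h - \dl^2 K_h^2)|v|^2.
$$

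Choosing $\dl$ so small that $2\dl K_0 K_h+\dl^2 K_h^2<c/2$ (the threshold depends only on the deterministic constants $c,K_0,K_h$), the matrix $\sg\sg'$ is uniformly elliptic with constant $c/2>0$, and the bound is uniform in $(t,\om)$ and in the choice of $h\in\cH$. I do not anticipate any serious obstacle: the statement is essentially a quantitative continuity of uniform ellipticity under small bounded perturbations, and the only point to be mindful of is that the constants involved in selecting $\dl$ are deterministic, so a single threshold suffices pathwise.
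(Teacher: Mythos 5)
Your proposal is correct and follows essentially the same route as the paper: both verify predictability and entry-wise boundedness directly from the definitions, then expand the quadratic form of $\sg\sg'=(\sg^0+\dl h)(\sg^0+\dl h)'$ around the uniformly elliptic leading term $\sg^0(\sg^0)'$ and absorb the $O(\dl)$ cross terms and $O(\dl^2)$ term for $\dl$ small. The only difference is cosmetic: you bound the perturbation via Cauchy--Schwarz and operator norms, while the paper bounds the entries of the perturbation matrices term by term; your version is a touch more explicit about the deterministic threshold for $\dl$, but the argument is the same.
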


\begin{proof}
The process $\sg$ is $F$-predictable as linear combination of $F$-predictable processes. Further,
$$
|\sg_{ij,t}|=|\sg_{ij,t}^0+\dl h_{ij,t}|\leq const+\dl r, \quad 
    0<\dl\ll 1.
$$

From (\ref{1.4.4}) and (\ref{4-3.20}) for each vector  $\nu_t=(\nu_t^1,\dots,\nu_t^d)$ we have
\begin{align}
\sum_{i,j=1}^d(\sg^2)_{ij,t}\nu_t^i\nu_t^j & =
    \sum_{i,j=1}^d(\sg^0+\dl h)(\sg^0+\dl h)_{ij,t}'\nu_t^i\nu_t^j\notag\\
& =\sum_{i,j=1}^d (\sg^0(\sg^0)')_{ij,t}\nu_t^i\nu_t^j+
    \dl\sum_{i,j=1}^d (\sg^0h')_{ij,t}\nu_t^i\nu_t^j \notag \\
& \quad +\dl\sum_{i,j=1}^d (h(\sg^0)')_{ij,t}\nu_t^i\nu_t^j+
    \dl^2\sum_{i,j=1}^d(hh')_{ij,t}\nu_t^i\nu_t^j. \label{1.4.6}
\end{align}

Note now that the elements of matrices $\sg^0$ and $h$ are bounded. Hence choosing  $\dl$ sufficiently small we get 
$$
\max\left(\dl|(\sg^0h')_{ij,t}|, \dl|(h(\sg^0))_{ij,t}|, 
    \dl^2|(hh')_{ij,t}|\right)\leq \frac{\ve}{3}\,.
$$

Therefore from (\ref{1.4.4}) and (\ref{1.4.6}) we get
\begin{gather*}
\sum_{i,j=1}^d \sg_{ij,t}^2 \nu_t^i\nu_t^j\geq 
    (c-const \cdot \ve)\sum_{i,j=1}^d|\nu_t^i|^2 \;\;\;\text{for each $\ve>0$}. 
\end{gather*}

Proposition is proved.
\end{proof}

Consider the set of processes $\{R^\sg (\text{or}\; X^\sg)$, $\sg\in A_\dl\}$, which represents the misspecified of asset price model.

Define the class of admissible trading strategies $\Tht=\Tht(\sg^0)$.

\begin{proposition}\label{1.4.2}
For each $R^d$-valued $F$-predictable process  $\tht=(\tht_t)_{0\leq t\leq T}$ and for each  $\sg\in A_\dl$, $\dl>0$,
\begin{align*}
aE\int_0^T|\tht_t|^2dC_t &  \leq E\int_0^T \tht_t'\sg_td\la M\ra_t\sg_t'\tht_t \\
& =E\int_0^T \tht_t'\sg_t\sg_t'\tht_tdC_t\leq 
    AE\int_0^T|\tht_t|^2dC_t,
\end{align*}
where the constants $a$, $A$ are such that  $0<a\leq A<\infty$, and the parameter $\dl>0$ is sufficiently small.
\end{proposition}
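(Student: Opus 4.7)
The plan is to reduce the claim to a pointwise (in $(t,\omega)$) two-sided bound on the quadratic form $\theta_t'\sigma_t\sigma_t'\theta_t$, and then integrate against $dC_t$ and take expectation. The structural ingredient is the orthogonality relation $d\langle M^i,M^j\rangle_t = I^{d\times d}_{ij}\,dC_t$ imposed in Section 3.1, which immediately yields $\sigma_t\,d\langle M\rangle_t\,\sigma_t' = \sigma_t\sigma_t'\,dC_t$; this gives the middle equality in the statement for free, so the whole problem is to verify the two outer inequalities.

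For the lower bound I would invoke Proposition \ref{1.4.1} directly. That proposition shows that, for $\delta$ small enough, the matrix $\sigma_t\sigma_t'$ is uniformly elliptic with constant $a:=c-\operatorname{const}\cdot\varepsilon>0$ (take $\varepsilon$ so small that $c-\operatorname{const}\cdot\varepsilon>0$; this fixes the admissible range of $\delta$). Applied to $v_t=\theta_t$, this gives
$$
  \theta_t'\sigma_t\sigma_t'\theta_t \;\geq\; a\,|\theta_t|^2
$$
for $P$-a.e.\ $\omega$ and a.e.\ $t\in[0,T]$. For the upper bound I would use that by (\ref{4-3.20}) and the hypothesis $h_{ij}\in\operatorname{Ball}_L(0,r)$, each entry satisfies $|\sigma_{ij,t}|\leq\operatorname{const}+\delta r$, so
$$
  \theta_t'\sigma_t\sigma_t'\theta_t \;=\; \sum_{i,j,k}\sigma_{ij,t}\sigma_{kj,t}\theta^i_t\theta^k_t
  \;\leq\; A\,|\theta_t|^2,
$$
with $A:=d^2(\operatorname{const}+\delta r)^2$, by the Cauchy--Schwarz inequality $|\sum_{i,k}(\sigma\sigma')_{ik}\theta^i\theta^k|\leq\|\sigma\sigma'\|\,|\theta|^2$ and the trivial bound $\|\sigma\sigma'\|\leq d\max_{ij}|\sigma_{ij}|^2\cdot d$. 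Crucially both $a$ and $A$ depend only on $c,r,\delta$ and the uniform bound on the entries of $\sigma^0$, hence are uniform over $\sigma\in A_\delta$ and independent of $\theta$.

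To finish, I would integrate these two pointwise inequalities against the (nondecreasing, finite) process $dC_t$ on $[0,T]$ and take $P$-expectation; since $a,A$ are deterministic constants this yields
$$
  aE\int_0^T|\theta_t|^2\,dC_t \;\leq\; E\int_0^T\theta_t'\sigma_t\sigma_t'\theta_t\,dC_t \;\leq\; AE\int_0^T|\theta_t|^2\,dC_t,
$$
which together with the identity $\sigma_t\,d\langle M\rangle_t\,\sigma_t'=\sigma_t\sigma_t'\,dC_t$ established above is the claim. The only real point requiring care is choosing the threshold on $\delta$ so that the lower-bound constant $a$ coming from Proposition \ref{1.4.1} is strictly positive; once that is fixed the rest is routine and there is no genuine technical obstacle, since the computation never leaves the setting of deterministic pointwise matrix estimates.
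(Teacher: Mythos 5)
Your proposal is correct and follows essentially the same route as the paper: the middle equality from $d\langle M\rangle_t=I^{d\times d}dC_t$, the upper bound from the entrywise boundedness of $\sigma=\sigma^0+\dl h$ (the paper uses $ab\le 2(a^2+b^2)$ where you use a crude operator-norm/Cauchy--Schwarz bound, which is the same elementary estimate), and the lower bound by invoking the uniform ellipticity established in Proposition \ref{1.4.1} for sufficiently small $\dl$. No gaps.
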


\begin{proof}
Remember that   $d\la M\ra_t=d\la M^i,M^j\ra_t=I_{ij}^{d\times d} dC_t$. Hence 
$$
E\int_0^T \tht_t'\sg_td\la M\ra_t\sg_t'\tht_t=
    E\int_0^T \tht_t'\sg_t\sg_t'\tht_t\,dC_t.
$$

Further, since $\sg=\sg^0+\dl h$ and elements of matrices  $\sg^0$ and $h$ are bounded, then the same is true for the elements of matrix $\sg$ with $0\leq \dl\leq const$. Thus using the inequality $ab\leq 2(a^2+b^2)$ we get 
$$
E\int_0^T \tht_t'\sg_t\sg_t'\tht_t\,dC_t\leq AE\int_0^T|\tht_t|^2dC_t.
$$
On the other hand, by Proposition \ref{1.4.1} the matrix $\sg^2=\sg\sg'$ is uniformly elliptic for sufficiently small $\dl$, which yields the first inequality.  
\end{proof}

\begin{definition}\label{d4-3.1}
The class $\Tht=\Tht(\sg^0)$ is a class of $R^d$-valued $F$-predictable processes $\tht=(\tht_t)_{0\leq t\leq T}$ such that 
\begin{equation}\label{4-3.21}
    E\int_0^T |\tht_t|^2 d C_t<\infty.
\end{equation}
\end{definition}

Let $\tht\in \Tht$ be the dollar amount (rather than the number of shares) invested in the stock $X^\sg$, $\sg\in A_\dl$. Then for each $\sg\in A_\dl$ the trading gains induced by the self-financing portfolio strategy associated to $\tht$ has the form 
\begin{equation}\label{4-3.22}
    G_t(\sg,\tht)=\int_0^t \tht_s'\,d R_s^\sg, \quad 0\leq t\leq T,
\end{equation}
where $R^d=(R_t^d)_{0\leq t\leq T}$ is the yield process given by (\ref{4-3.10}).

Introduce the condition: 
\smallskip

(c.2) There exists ELMM $\ol{Q}$ such that the density process $z=z^{\ol{Q}}$ satisfies the reverse H\"{o}lder inequality $R_2(P)$, see definition in RSch \cite{34}.
\smallskip

It is well-known that under the conditions (c.1) and (c.2) the density process $\wt z=(\wt z_t)_{\leq t\leq T}$ of the variance-optimal ELMM satisfies $R_2(P)$ as well, see Dolean et al. \cite{8}.

Now under the conditions (c.1) and (c.2) the r.v. $G_T(\sg,\tht)\in L^2(P)$, $\fa \sg\in A_\dl$, and the space $G_T(\sg,\Tht)$ is closed in $L^2(P)$, $\fa \sg\in A_\dl$ (see, e.g., Theorem 2 of RSch \cite{34}).

A contingent claim is an $\cF_T$-measurable square-integrable r.v. $H$, which models the payoff from a financial product at the maturity date $T$. 

The problem we are interested in is to find the robust hedging strategy for a contingent claim $H$ in the above described incomplete financial market model with misspecified asset price process $X^\sg$, $\sg\in A_\dl$, using mean-variance approach. 

For each $\sg\in A_\dl$, the total loss of a hedger, who starts with the initial capital $x$, uses the strategy $\tht$, believes that the stock price process follows $X^\sg$, and has to pay a random amount $H$ at the date $T$, is $H$-$x$-$G_T(\sg,\tht)$.

Denote 
\begin{equation}\label{4-3.23}
    \cJ(\sg,\tht):=E(H-x-G_T(\sg,\tht))^2.
\end{equation}

One setting of the robust mean-variance hedging problem consist in solving the optimization problem 
\begin{equation}\label{4-3.24}
    \text{minimize} \;\; \sup_{\sg\in A_\dl} \cJ(\sg,\tht) \;\;
        \text{over all strategies} \;\; \tht\in \Tht.
\end{equation}

We ``slightly'' change this problem using the approach developed in Toronjadze \cite{42} which based on the following approximation 
\begin{align*}
    \sup_{\sg\in A_\dl} \cJ(\sg,\tht) & =\exp \big\{ \sup_{h\in \cH} 
        \ln\cJ(\sg^0+\dl h,\tht)\big\} \\
    & \simeq \exp \bigg\{ \sup_{h\in \cH} \bigg[ \ln \cJ(\sg^0,\tht) +
        \dl\,\frac{D\cJ(\sg^0,h,\tht)}{\cJ(\sg^0,\tht)}\bigg]\bigg\} \\
    & =\cJ(\sg^0,\tht) \exp \bigg\{ \dl\sup_{h\in \cH} 
        \frac{D\cJ(\sg^0,h,\tht)}{\cJ(\sg^0,\tht)}\bigg\},
\end{align*}
where 
$$  
    D\cJ(\sg^0,h,\tht):= \frac{d}{d\dl}\,\cJ(\sg^0+\dl h,\tht) |_{\dl=0} =
        \lim_{\dl\to 0} \frac{\cJ(\sg^0+\dl h,\tht)-\cJ(\sg^0,\tht)}{\dl}\,,
$$
is the Gateaux differential of the functional $\cJ$ at the point $\sg^0$ in the direction~$h$. 

Approximate (in leading order $\dl$) the optimization problem (\ref{4-3.24}) by the problem 
\begin{gather}
    \text{minimize} \;\; \cJ(\sg^0,\tht) \exp \bigg\{ \dl\sup_{h\in \cH} 
        \frac{D\cJ(\sg^0,h,\tht)}{\cJ(\sg^0,\tht)}\bigg\} \notag \\
    \text{over all strategies} \;\; \tht\in \Tht.  \label{4-3.25}
\end{gather}

Note that each solution $\tht^*$ of the problem (\ref{4-3.25}) minimizes $\cJ(\sg^0,\tht)$ under the constraint 
$$  
    \sup_{h\in \cH} \frac{D\cJ(\sg^0,h,\tht)}{\cJ(\sg^0,\tht)} \leq c:=
        \sup_{h\in \cH} \frac{D\cJ(\sg^0,h,\tht^*)}{\cJ(\sg^0,\tht^*)}\,.
$$

This characterization of an optimal strategy $\tht^*$ of the problem (\ref{4-3.25}) leads to the 
\begin{definition}\label{d4-3.2}
The trading strategy $\tht^*\in \Tht$ is called optimal mean-variance robust trading strategy against the class of alternatives $\cH$ if it is a solution of the optimization problem 
\begin{gather}
    \text{minimize} \;\; \cJ(\sg^0,\tht)\;\;\text{over all strategies} \;\; 
        \tht\in \Tht, \;\; \text{subject to constraint} \notag \\
    \sup_{h\in \cH} \frac{D\cJ(\sg^0,h,\tht)}{\cJ(\sg^0,\tht)}\leq c, \label{4-3.26}
\end{gather}
where $c$ is some generic constant.
\end{definition}

\begin{remark}\label{r4-3.3}
In contrast to ``mean-variance robust'' trading strategy which associates with optimization problem (\ref{4-3.24}) and control theory, we find the ``optimal mean-variance robust'' strategy in the sense of Definition \ref{d4-3.2}. Such approach and term are common in robust statistics theory (see, e.g., Hampel et al. \cite{12}, Rieder \cite{35}).
\end{remark}

Does the suggested approach provide ``good'' approximation?
Consider the case.

\smallskip

{\bf Diffusion model with zero drift.}
Let a standard Wiener process $w=(w_t)_{0\leq t\leq T}$ be given 
on the complete probability space $(\Om,\cF,P)$. 
Denote by $F^w=(\cF_t^w$, $0\leq t\leq T)$ the $P$-augmentation of the natural
filtration $\cF_t^w=\sg(w_s$, $0\leq s\leq t)$, $0\leq t\leq T$, 
generated by $w$.

    Let the stock price process be modeled by the equation 
\begin{equation*} 
    dX_t^\sg=X_t^\sg\cdot \sg_t\,dw_t, \;\; X_0^\sg>0, \;\; 0\leq t\leq T,
\end{equation*}
where $\sg\in A_\dl$ with 
$$  \int\limits_0^T (\sg_t^0)^2\,dt<\infty     $$ 
and $h\in \Ball_{L_\infty(dt\times dP)}(0,r)$, $0<r<\infty$.
All considered processes are real-valued.

    Denote by $R^\sg$ the yield process, i.e.,
\begin{equation*} 
    dR_t^\sg=\sg_t\,dw_t, \;\; R_0^\sg=0, \;\; 0\leq t\leq T.
\end{equation*}

The wealth at maturity $T$, with the initial endowment $x$, is equal to 
$$  V_T^{x,\tht}(\sg)=x+\int_0^T \tht_t\,dR_t^\sg.      $$
Let, further, the contingent claim $H$ be $\cF_T^w$-measurable 
$P$-square-integrable r.v.

    Consider the optimization problem (\ref{4-3.24}). 
It is easy to see that if $\sg\in A_\dl$; then
$$  \sg_t^0-\dl r\leq \sg_t\leq \sg_t^0+\dl r, \;\; 0\leq t\leq T, \;\; 
        \Pas,       $$

    By the martingale representation theorem
\begin{equation*} 
    H=EH+\int_0^T \vf_t^H\,dw_t, \;\; \Pas,
\end{equation*}
where $\vf^H$ is the $F^w$-predictable process with
\begin{equation}\label{ch12}
    E\int_0^T (\vf_t^H)^2\,dt<\infty.
\end{equation}
Hence
$$  E\big(H-V_T^{x,\tht}(\sg)\big)^2=
        (EH-x)^2+E\int_0^T (\vf_t^H-\sg_t\tht_t)^2\,dt.     $$
From this it directly follows that the process
\begin{align}
    \sg_t^{*}(\tht) & =(\sg_t^0-\dl r)
        I_{\{\frac{\vf_t^H}{\tht_t}\geq \sg_t^0\}}
            I_{\{\tht_t\neq 0\}} \notag \\
    & +(\sg_t^0+\dl r)I_{\{\frac{\vf_t^H}{\tht_t}<\sg_t^0\}}
        I_{\{\tht_t\neq 0\}}, \;\; 0\leq t\leq T,\label{ch13}
\end{align}
is a solution of the optimization problem
$$  \text{maximize} \;\; E\big(H-V_T^{x,\tht}(\sg)\big)^2 \;\; 
        \text{over all} \;\; \sg\in A_\dl, \;\; 
            \text{with a given} \;\; \tht\in \Tht.      $$
It remains to minimize (w.r.t. $\tht$) the expression
$$  E\int_0^T \big(\vf_t^H-\sg_t^{*}(\tht)\tht_t\big)^2\,dt.    $$
From (\ref{ch13}) it easily follows that the equation (w.r.t. $\tht$)
$$  \vf_t^H-\sg_t^{*}(\tht)\tht_t=0,    $$
has no solution, but
\begin{equation}\label{ch14}
    \tht_t^{*}=\frac{\vf_t^H}{\sg_t^0}\,I_{\{\sg_t^0\neq 0\}}, \;\; 
        0\leq t\leq T,
\end{equation}
solves problem. We assume that $0/0:=0$. 

    Consider now the optimization problem (\ref{4-3.26}).

    For each fixed $h$
\allowdisplaybreaks
\begin{align*}
    J(\sg,\tht) & =E\Big(H-x-\int_0^T \tht_t\,dR_t^\sg\Big)^2 \\
    & =E\bigg(H-x-\int_0^T \tht_t\sg_t^0\,dw_t-
        \dl\int_0^T \tht_th_t\,dw_t\bigg)^2 \\
    & =J(\sg^0,\tht)-2\dl E\bigg[\Big(EH-x+
            \int_0^T \big(\vf_t^H-\tht_t\sg_t^0\big)\,dw_t\Big)
        \int_0^T \tht_th_t\,dw_t\bigg] \notag \\
    & \quad +\dl^2E\int_0^T \tht_t^2h_t^2\,dt,
\end{align*}
and hence
\begin{equation}\label{ch15}
    DJ(\sg^0,h;\tht)=
        2E\int_0^T \big(\tht_t\sg_t^0-\vf_t^H\big)\tht_th_t\,dt,
\end{equation}
as follows from (\ref{ch12}), the definition of the class $\cH$ and 
the estimation
\begin{gather}
    \Big(E\int_0^T \big(\tht_t\sg_t^0-\vf_t^H\big)\tht_th_t\,dt\Big)^2\leq
            E\int_0^T \big(\tht_t\sg_t^0-\tht_t^H\big)^2\,dt\;
        E\int_0^T \tht_t^2h_t^2\,dt \notag \\
    \leq \const\cdot r^2\bigg(E\int_0^T \tht_t^2(\sg_t^0)^2\,dt+
        E\int_0^T (\vf_t^H)^2\,dt\bigg)E\int_0^T \tht_t^2\,dt<\infty.\label{ch16}
\end{gather}
Since, further, $DJ(\sg^0,h;\tht)=0$ for $h\equiv 0$, using (\ref{ch16}) we get
$$  0\leq \sup_{h\in \cH} DJ(\sg^0,h;\tht)<\infty.  $$
Hence we can take $0\leq c<\infty$ in problem (6). 
Now if we substitute $\tht^{*}$ from (\ref{ch14}) into (\ref{ch15}), 
we get $DJ(\sg^0,h;\tht^{*})=0$ for each $h$, and thus
$$  \frac{\sup\limits_{h\in \cH} 
        DJ(\sg^0,h;\tht^{*})}{J(\sg^0,\tht^{*})}=0. $$
If we recall that 
$\tht^{*}=\arg\min\limits_{\tht\in \Tht_{A_\dl}} J(\sg^0,\tht)$, 
we get that $\tht^{*}$ defined by (\ref{ch14}) is a solution 
of this optimization problem  as well.

    Thus we prove that

{\em    {\rm (a)} the mean-variance robust trading strategy 
$\tht^{*}=(\tht_t^{*})_{0\leq t\leq T}$ for the optimization problem $(\ref{4-3.24})$ 
is given by the formula
$$  \tht_t^{*}=\frac{\vf_t^H}{\sg_t^0}\,I_{\{\sg_t^0\neq 0\}};  $$

    {\rm (b)} at the same time this strategy is an optimal mean-variance robust trading 
strategy for the optimization problem $(\ref{4-3.26})$. 

Hence in this case the suggested approach leads to the perfect solution of initial problem $(\ref{4-3.24})$.
}

\smallskip

To solve the problem (\ref{4-3.26}) in general case we need to calculate $D\cJ(\sg^0,h,\tht)$.
Suppose that $k=(k_t)_{0\leq t\leq T}=(k_{i,t},\, 1\leq i\leq d)_{0\leq t\leq T}$ from (\ref{4-3.10}) is such that $|k_{i,t}|\leq const$ $\fa i,t$.

Following RSch \cite{34} and GLP \cite{11} introduce the probability measure $\wt Q\sim P$ on $\cF_T$ by the relation 
\begin{equation}\label{4-3.27}
    d\wt Q=\frac{\wt z_T}{\wt z_0} \,d\wt P \;\;\;\Big(\text{and hence} \;\; 
        d\wt Q=\frac{\wt z_T^2}{\wt z_0} \, dP\Big).
\end{equation}
Using Proposition 5.1 of GLP \cite{11} we can write 
\allowdisplaybreaks
\begin{gather}
    \cJ(\sg,\tht)= E\,\frac{\wt z_T^2}{\wt z_0^2}\,
        \frac{\wt z_0^2}{\wt z_T^2}\, \Bigg( H-x-\int_0^T \tht_t'\,d R_t^\sg\Bigg)^2
            \notag \\
    = \wt z_0^{-1} E^{\wt Q} \, \frac{\wt z_0^2}{\wt z_T^2}\,
        \Bigg( H-x-\int_0^T \tht_t'\sg_t \,d M_t^0\Bigg)^2 \notag \\
    =\wt z_0^{-1} E^{\wt Q} \Bigg( \frac{H\wt z_0}{\wt z_T} -x-
        \int_0^T \psi_t^0(\sg)\, d\,\frac{\wt z_0^2}{\wt z_t^2} -
        \int_0^T \left( \psi_t^1(\sg)\right)' \,
       d\,\frac{M_t^0}{\wt z_t}\,\wt z_0\Bigg)^2 \notag \\
    := \ol{\cJ}(\sg,\psi^0,\psi^1) \;\;\;(\text{or} \;\; \ol{\cJ}(\sg,\psi) \;\;
        \text{with} \;\; \psi=(\psi^0,\psi^1)'\,), \label{4-3.28}
\end{gather}
where 
\begin{equation}\label{4-3.29}
\begin{gathered}
    \psi_t^1=\psi_t^1(\sg)=\sg_t'\tht_t, \\
        \psi_t^0=\psi_t^0(\sg)=\int_0^t \tht_s'\sg_s d M_s^0-
            \tht_t'\sg_t M_t^0, \;\;\; 0\leq t\leq T.
\end{gathered}
\end{equation}
Thus 
$$  
    \psi_t^1(\sg) =\psi_t^1(\sg^0)+\dl\psi_t^1(h), \quad 
    \psi_t^0(\sg) =\psi_t^0(\sg^0)+\dl\psi_t^0(h).
$$

Let (following RSch \cite{34}) 
\begin{equation}\label{4-3.30}
    \frac{H}{\wt z_T}\,\wt z_0 =E\left( \frac{H}{\wt z_T}\, \wt z_0\right) +
        \int_0^T (\psi_t^H)'dU_t+L_T,
\end{equation}
be the Galtchouk--Kunita--Watanabe decomposition of r.v. $\frac{H}{\wt z_T}\,\wt z_0$ w.r.t. $R^{(d+1)}$-valued $\wt Q$-local martingale $U=\big( \frac{\wt z_0}{\wt z}, \frac{M^0}{\wt z}\,\wt z_0\big)'$, where $\psi^H=(\psi^{0,H},\psi^{1,H})'\in L^2(U,\wt Q)$, the space of $F$-predictable processes $\psi$ such that $\int \psi' dU\in \cM^2(\wt Q)$ of martingale, and $L\in \cM_{0,\loc}^2(\wt Q)$, $L$ is $\wt Q$-strongly orthogonal to $U$. 

Remember that 
\begin{equation}\label{4-3.31}
    \psi=(\psi^0,\psi^1)'.
\end{equation}
Then, using (\ref{4-3.28}), (\ref{4-3.29}) and (\ref{4-3.30}) we can write for each $h$ 
\begin{gather}
    \cJ(\sg^0+\dl h,\psi) =\cJ(\sg^0,\psi)+\dl\cdot 2\wt z_0^{-1} \notag \\
    \times E^{\wt Q} \Bigg\{ \bigg[ 
        \bigg( x-E^{\wt Q} \,\frac{H}{\wt z_T}\,\wt z_0\bigg) -L_T +
        \int_0^T (\psi_t(\sg^0)-\psi_t^H)'dU_t\bigg]
        \int_0^T (\ol{\psi}_t(h))' dU_t\Bigg\} \notag \\
    +\dl^2 \wt z_0^{-1} E^{\wt Q} \Bigg[ \int_0^T (\ol{\psi}_t(h))' dU_t\Bigg]^2\notag\\
    =\cJ(\sg^0,\psi)+\dl\cdot 2\wt z_0^{-1} E^{\wt Q} 
        \Bigg[ \int_0^T (\psi_t(\sg^0)-\psi_t^H)' dU_t 
            \int_0^T (\ol{\psi}_t(h))' dU_t\Bigg] \notag \\
    +\dl^2 \wt z_0^{-1} E^{\wt Q} \Bigg[ \int_0^T (\ol{\psi}_t(h))' dU_t\Bigg]^2.
        \label{4-3.32}
\end{gather}
Using Proposition 8 of RSch \cite{34} we have for each $h$ 
$$
    \frac{\wt z_0}{\wt z_T}\,G_r(h,\Tht)=\Bigg\{ \int_0^T ({\psi}(h))' dU_t:
        {\psi}(h) \in L^2(U,\wt Q)\Bigg\},
$$
and hence by (\ref{4-3.22})
\begin{gather}
    E^{\wt Q} \Bigg(\int_0^T ({\psi}_t(h))' dU_t\Bigg)^2 \notag \\
    =E^{\wt Q}\,\frac{\wt z_0^2}{\wt z_T^2}\, G_T^2(h,\tht)= \wt z_0 EG_T^2(h,\tht) =
        \wt z_0 E\Bigg( \int_0^T \tht_t'\, dR_t^h\Bigg)^2 \notag \\
    =\wt z_0 E\Bigg( \int_0^T \tht_t' h_t d M_t^0\Bigg)^2 =
        \wt z_0 E \Bigg( \int_0^T \tht_t' h_t d \la M\ra_tk_t +
        \int_0^T \tht_t' h_t d M_t\Bigg)^2 \notag \\
    \leq const \Bigg[ E \Bigg( \int_0^T |\tht_t' h_t d \la M\ra_t k_t| \Bigg)^2 +
        E\Bigg( \int_0^T \tht_t' h_t d M_t\Bigg)^2 \Bigg] \notag \\
    \leq const \;r^2 E\int_0^T |\tht_t|^2 d C_t<\infty. \label{4-3.33}
\end{gather}

Further,
\begin{gather}
    \Bigg( E^{\wt Q} \Bigg[ \int_0^T (\psi_t(\sg^0)-\psi_t^H)' dU_t 
        \int_0^T (\psi_t(h))' dU_t \Bigg]\Bigg)^2 \notag \\
    \leq E^{\wt Q} \Bigg( \int_0^T (\psi_t(\sg^0)-\psi_t^H)' dU_t \Bigg)^2 
        E^{\wt Q} \Bigg( \int_0^T (\psi_t(h))' dU_t \Bigg)^2<\infty.\label{4-3.34}
\end{gather}
From these estimates we conclude that: 
\begin{equation}\label{4-3.35}
    1) \;\;\; D\ol{\cJ} (\sg^0,h,\psi)=
        2\wt z_0^{-1} E^{\wt Q} \int_0^T (\psi_t(\sg^0)-\psi_t^H)' 
            d\la U\ra_t \psi_t(h)<\infty,\;\;
\end{equation}
thanks to (\ref{4-3.33}).

2) $D\ol{\cJ}(\sg^0,h,\psi)|_{h\equiv 0} =0$, since $\psi(0)=0$ by (\ref{4-3.31}) and (\ref{4-3.29}).

Thus 
\begin{equation}\label{4-3.36}
    \sup_{h\in \cH} D \ol{\cJ}(\sg^0,h,\psi)\geq 0.
\end{equation}

3) From (\ref{4-3.34}) and (\ref{4-3.33}) we get 
\begin{gather*}
    (D\ol{\cJ}(\sg^0,h,\psi))^2 \leq const \;\wt z_0^{-2} r^2 \\
    \times E^{\wt Q} \int_0^T (\psi_t(\sg^0)-\psi_t^H)' d\la U\ra_t 
        (\psi_t(\sg^0)-\psi_t^H) 
        E\int_0^T |\tht_t|^2 d C_t<\infty.
\end{gather*}
Thus $|D\ol{\cJ}(\sg^0,h,\psi)|$ is estimated by the expression which does not depend on $h$, and is equal to zero if we substitute $\psi_t(\sg^0)\equiv \psi_t^H$, $0\leq t\leq T$.

Hence, by (\ref{4-3.36})
\begin{equation}\label{4-3.37}
    0\leq \sup_{h\in \cH} D\ol{\cJ}(\sg^0,h,\psi)|_{\psi\equiv \psi^H} \leq 
        \sup_{h\in \cH} |D\ol{\cJ}(\sg^0,h,\psi)|\big|_{\psi\equiv \psi^H}=0
\end{equation}

Further, from (\ref{4-3.36}) follows that we can take $c\in [0,\infty)$ in (\ref{4-3.26}).

Now substituting $\psi\equiv \psi^H$ into $\ol{\cJ}(\sg^0,\psi)$ and $D\ol{\cJ}(\sg^0,h,\psi)$ we get 
$$  
    \ol{\cJ}(\sg^0,\psi^H) =\min_\psi \ol{\cJ}(\sg^0,\psi) =\wt z_0^{-1} (E^{\wt P} H-x)^2+
        \wt z_0^{-1} E^{\wt Q} L_T^2
$$
(see Lemma 5.1 of GLP \cite{11}) and 
$$  
    \sup_{h\in \cH} \frac{D \ol{\cJ}(\sg^0,h,\psi^H)}{\ol{\cJ}(\sg^0,\psi^H)}=0.
$$

Hence the constraint of problem (\ref{4-3.26}) is satisfied.

\begin{remark}\label{r4-3.4}
If $x=E^{\wt P}H$ and $L_T\equiv 0$, then we get 
$$  
    \frac{D \ol{\cJ}(\sg^0,h,\psi^H)}{\ol{\cJ}(\sg^0,\psi^H)}=\frac{0}{0}
$$
which is assumed to be zero, since if we consider the shifted risk functional $\wt\cJ=\ol{\cJ}+1$, the optimization problem and the optimal trading strategy will not change, but $D\wt\cJ(\sg^0,h,\psi^H)=D\ol{\cJ}(\sg^0,h,\psi^H)=0$ and $\wt\cJ(\sg^0,\psi^H)=1$.
\end{remark}

Finally, using Proposition 8 of RSch \cite{34} we arrive at the following

\begin{theorem}\label{t4-3.1}
In Model $(\ref{4-3.10})$ under conditions {\rm (c.1)} and {\rm (c.2)} the optimal mean-variance robust trading strategy $($in the sense of Definition $\ref{d4-3.1})$ is given by the formula 
\begin{equation}\label{4-3.38}
    \tht_t^* =((\sg_t^0)')^{-1} [\psi_t^{1,H} +\zt_t(V_t^*-(\psi_t^H)'U_t)],
        \quad 0\leq t\leq T,
\end{equation}
where
\begin{gather*}
    \psi_t^H=(\psi_t^{0,H},\psi_t^{1,H})', \quad 
        U_t=\left( \frac{\wt z_0}{\wt z_t}\,,\frac{M_t^0}{\wt z_t}\,\wt z_0 \right)', \\
    V_t^*=\frac{\wt z_0}{\wt z_t} \Bigg( x+\int_0^t (\psi_t^H)' dU_t\Bigg),
\end{gather*}
$\psi_t^H$ and $\zt_t$ are given by the relations $(\ref{4-3.30})$ and $(\ref{4-3.11})$, respectively, $\wt z_t$ is defined in $(\ref{4-3.11})$.
\end{theorem}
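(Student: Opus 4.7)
The plan is to combine the arguments already assembled in the lead-up to the theorem with an application of Proposition 8 of Rheinländer–Schweizer to recover the actual trading strategy from the transformed variables.

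First I would re-read the transformation $(\ref{4-3.28})$ as defining a new risk functional $\ol{\cJ}(\sg^0,\psi)$ in the pair $\psi=(\psi^0,\psi^1)'$ rather than in $\tht$. By $(\ref{4-3.29})$ any admissible $\tht\in\Tht$ gives rise to an admissible $\psi$, and conversely, by Proposition~8 of RSch \cite{34}, every process $\psi\in L^2(U,\wt Q)$ can be represented as $\int(\psi)'dU$ for a suitable self-financing strategy. So the optimization problem $(\ref{4-3.26})$ in $\tht$ is equivalent to the same problem stated in $\psi$.

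The second step is to observe that the infimum of the \emph{unconstrained} problem $\min_\psi\ol{\cJ}(\sg^0,\psi)$ is attained at $\psi=\psi^H$, where $\psi^H=(\psi^{0,H},\psi^{1,H})'$ is the integrand in the Galtchouk–Kunita–Watanabe decomposition $(\ref{4-3.30})$ of $H\wt z_0/\wt z_T$ with respect to the $\wt Q$-local martingale $U$. This is exactly Lemma 5.1 of GLP \cite{11}, and it gives the minimal value $\wt z_0^{-1}(E^{\wt P}H-x)^2+\wt z_0^{-1}E^{\wt Q}L_T^2$ already recorded just before the theorem. Next I verify that $\psi^H$ satisfies the side condition of $(\ref{4-3.26})$: substitution into the expression $(\ref{4-3.35})$ for $D\ol{\cJ}(\sg^0,h,\psi)$ makes the integrand vanish, so, together with $(\ref{4-3.37})$, the constrained supremum is zero and the constraint is trivially met for any $c\geq 0$. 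Consequently $\psi^H$ solves the problem $(\ref{4-3.26})$ in the $\psi$-variables.

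The final and principal step is to back out $\tht^*$ from $\psi^H$. Writing $V_t^*:=\frac{\wt z_0}{\wt z_t}(x+\int_0^t(\psi_s^H)'dU_s)$, the process $V^*$ is the optimal value process of the mean-variance hedge in the nominal model. Applying Proposition~8 of RSch \cite{34} (its explicit formula for the mean-variance optimal strategy in terms of the GKW decomposition and the density process $\zt$ of $(\ref{4-3.11})$) gives
\[
    \sg_t'\tht_t^{*} \;=\; \psi_t^{1,H}+\zt_t\bigl(V_t^{*}-(\psi_t^H)'U_t\bigr),
\]
so that inverting $\sg_t=\sg_t^0$ yields the asserted formula $(\ref{4-3.38})$. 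The fact that the resulting $\tht^*$ lies in $\Tht$ follows from Proposition~3.2 (uniform ellipticity of $(\sg^0)^2$) together with the $L^2(U,\wt Q)$-integrability of $\psi^H$.

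The point where I expect real work is the last step: although the formula is dictated by RSch, one must carefully match the parameterization $(\ref{4-3.29})$ (which entangles $\tht$ with $\sg$ and with $M^0$) to the hedging representation in Proposition~8, and justify that the ``correction'' $\zt_t(V_t^*-(\psi_t^H)'U_t)$ is exactly what converts the GKW integrand $\psi^{1,H}$ into the component $\sg_t'\tht_t^*$ of a genuinely self-financing strategy. Once this identification is made, the theorem follows without additional calculation.
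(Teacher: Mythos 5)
Your proposal is correct and follows essentially the same route as the paper: the change of variables $(\ref{4-3.28})$--$(\ref{4-3.29})$, the identification of the unconstrained minimizer $\psi^H$ via the Galtchouk--Kunita--Watanabe decomposition $(\ref{4-3.30})$ and Lemma 5.1 of GLP, the verification that $D\ol{\cJ}(\sg^0,h,\psi^H)=0$ so the constraint in $(\ref{4-3.26})$ holds, and the final inversion to $\tht^*$ via Proposition 8 of RSch. The paper's ``proof'' is exactly this chain of arguments in subsection 3.3 culminating in the citation of Proposition 8, so your reconstruction matches it step for step.
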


\end{document}